\newcommand{\newc}{\newcommand}
\newc{\beq}{\begin{equation}}
\newc{\eeq}{\end{equation}}
\newc{\bea}{\begin{array}}
\newc{\eea}{\end{array}}
\newcommand{\ben}{\begin{eqnarray}}
\newcommand{\een}{\end{eqnarray}}
\newc{\ra}{\rightarrow}
\newc{\bfx}{{\bf x}}
\newc{\bfV}{{\bf V}}
\newc{\cO}{{\cal O}}
\newc{\bfv}{{\bf v}}
\newc{\bfu}{{\bf u}}
\newc{\bfp}{{\bf p}}
\newc{\ve}{{\varepsilon}}
\newc{\Psibar}{\overline\Psi}
\newc{\w}{{\bf w}}
\newc{\E}{{\mathbf{E}}}
\newc{\EE}{{\mathcal E}}
\newc{\bfn}{{\mathbf\nabla}}
\newc{\la}{{\cal L}}
\newc{\tla}{{\tilde{\cal L}}}
\newc{\bp}{{\bf p}}
\newc{\ho}{\hookrightarrow }
\newc{\bP}{{\bf P}}
\newc{\pd}{{\partial}}
\newc{\piv}{{\partial_4}}
\newc{\pv}{{\partial_5}}
\newc{\bJ}{{\bf J}}
\newc{\bze}{{\mathbf 0}}
\newc{\bK}{{\bf K}}
\newc{\tphi}{{\tilde\phi}}
\newc{\tF}{{\tilde F}}
\newc{\tD}{{\tilde D}}
\newc{\tJ}{{\tilde J}}
\newc{\tj}{{\tilde j}}
\newc{\bD}{{\bf D}}
\newc{\tvphi}{{\tilde\varphi}}
\newc{\trho}{{\tilde\rho}}
\newc{\ttheta}{{\tilde\theta}}
\newc{\tpsi}{{\tilde\psi}}
\newc{\tu}{{\tilde u}}
\newc{\cD}{{\cal D}}
\newc{\tPhi}{{\tilde\Phi}}
\newc{\tPsi}{{\tilde\Psi}}
\newc{\tA}{{\tilde A}}
\newc{\talpha}{{\tilde\alpha}}
\newc{\tbeta}{{\tilde\beta}}
\newc{\bA}{{\mathbf A}}
\newc{\bB}{{\bf B}}
\newc{\br}{{\bf r}}
\newc{\sig}{{\mathbf\sigma}}
\newc{\eg}{{\rm e.g.\ }}
\newc{\ie}{{\rm i.e.\ }}
\newcommand{\bey}{\begin{eqnarray}}
\newcommand{\pslash}{\not{\hbox{\kern-2.3pt $p$}}}
\newcommand{\pdslash}{\not{\hbox{\kern-2pt $\partial$}}}
\newcommand{\eey}{\end{eqnarray}}
\newtheorem{theorem}{Theorem}
\newtheorem{proposition}[theorem]{Proposition}
\newenvironment{proof}[1][Proof]{\noindent\textbf{#1.} }{\ \rule{0.5em}{0.5em}}
\newtheorem{lemma}{Lemma}
\newtheorem{example}[theorem]{Example}
\newtheorem{definition}[theorem]{Definition}
\begin{document}

\begin{titlepage}
\vskip 2cm
\begin{center}
{\Large  Clifford Algebras, Quantum Neural Networks and Generalized Quantum Fourier Transform
\footnote{{\tt matrindade@uneb.br}}}
 \vskip 10pt
{ Marco A. S. Trindade$^{\dag}$  \\
Vinicius N. L. Rocha $^{\dag \dag}$ \\
S. Floquet $^{\dag \dag \dag}$ \\}
\vskip 5pt
{\sl $^{\dag}$Colegiado de Física, Departamento de Ciências Exatas e da Terra, Universidade do Estado da Bahia\\
Salvador, Bahia, Brazil \\
$^{\dag \dag}$ Atos, Latin American Quantum Computing Center, Salvador-BA, 41650-010, Brazil \\
$^{\dag \dag \dag}$ Colegiado de Engenharia Civil, Universidade Federal do Vale do São Francisco, Juazeiro, Bahia, Brazil}
\vskip 2pt
\end{center}

\begin{abstract}

We propose models of quantum neural networks through Clifford algebras, which are capable of capturing geometric features of systems and to produce entanglement. Due to their representations in terms of Pauli matrices, the Clifford algebras are the natural framework for multidimensional data analysis in a quantum setting. Implementation of activation functions and unitary learning rules are discussed. In this scheme, we also provide an algebraic generalization of the quantum Fourier transform containing additional parameters that allow performing quantum machine learning. Furthermore, some interesting properties of the generalized quantum Fourier transform have been proved.
\end{abstract}

\bigskip

{\it Keywords:} Clifford algebras; Quantum neural networks; Quantum Fourier Transform

\vskip 3pt

\end{titlepage}

%\tableofcontents

\newpage

\setcounter{footnote}{0} \setcounter{page}{1} \setcounter{section}{0} %
\setcounter{subsection}{0} \setcounter{subsubsection}{0}
%%%%%%%%%%%%%%%%%%%%%%%%%%%%%%%%%%%%%%%%%%%%%%%%%%%%%%%%%%%%%%%%%%%%%%%%%%
%%%%%%%%%%%%%%%%%%%%%%%%%%%%%%%%%%%%%%%%%%%%
\section{Introduction}
Artificial neural networks are computing models made up of elementary units called artificial neurons. There are several important applications in many fields such as pattern recognition \cite{Image}, medical diagnosis \cite{Med1, Med2}, combinatorial optimization problems \cite{Opt}. With the advent of quantum computing quantum neural networks have been proposed. The seminal work on quantum neural computing was developed by Kak \cite{Kak}. Posteriorly, Altaisky \cite{Alt} proposes a quantum version of perceptron. However the learning algorithm is not unitary, one of crucial ingredients of quantum mechanics. Since then, many approaches have appeared in literature. Silva \cite{Ade} \emph{et al} defined a quantum perceptron over a field in order to overcome the limitations of high cost learning algorithms in classical neural networks. It was proposed a superposition based architecture learning algorithm (SAL) to optimize the weights of a neural network. In reference \cite{Beer}, Beer \emph{et al} introduced quantum deep neural networks with an efficient quantum training algorithm, using the fidelity as a cost function. In this scheme, tolerance to noise training data has been demonstrated. Another interesting model was proposed by Shao \cite{Shao}. It consists of a quantum feed-forward neural networks whose learning algorithm is unitary and it contains quantum superposition and parallelism features. In addition, the Hadamard and swap tests were explored. The procedure is analogous to variational quantum eigensolvers.

Variational quantum cicuits, which are also sometimes quantum neural networks can be explored as quantum machine learning models \cite{Bene, Maria}. Due to limitations of near-term quantum computing, variational circuits have provided new perspectives that exceed the questions related to computational speedups \cite{Harrow, Lloyd} (which are obviously extremely relevant and constitute the gold standard of algorithmic design \cite{Maria}). Thus, the usefulness of quantum properties as superposition and entanglement \cite{Cai} can be investigated in this context \cite{Maria}. As highlighted by \cite{Biamonte} quantum systems can be generate patterns in data which are not feasible for classical systems. Besides, quantum machine learning may be able to recognize and classify patterns that are classically inaccessible \cite{Biamonte}. Possible guides for the construction of more complex architectures that capture non-classical features and geometric and topological information are geometric algebras (Clifford algebras).

Neural network model based on quantum information processing and quaternions (even subalgebras of Clifford algebra $Cl(3,0)$ has been proposed by Teguri \emph{et al} \cite{Teguri} called QQNN (Quaternionic Qubit Neural Network). Numerical experiments indicate a better performance in the prediction time-series of a chaotic system, comparated to conventional real-valued network. Buchholz and Sommer \cite{Sommer, Sven} have analyzed Clifford algebras for the design of neural architectures capable of processing a plethora of geometric objects. Corrochano \emph{et al} \cite{Ed}  formulated a quanternionic quantum neural network with applications for pattern recognition. It was shown that the QQNN have a good performance than others approaches, requiring less inputs per pattern (APM- autonomous perceptron models). Quaternions can be identified as the even part of Clifford algebra $Cl(3,0)$. There are several applications of Clifford algebras in quantum computation. Vlasov \cite{Vlasov} used Clifford algebras $Cl(2n,0)$ on the construction of universal sets of quantum gates. Josza and Miyake has explored Jordan-Wigner representations of Clifford algebras in Gaussian quantum circuits \cite{Josza} and the computational complexity of Grover algorithm can be simplified with Clifford algebras \cite{Alves}. Trindade \emph{et al} \cite{Trindade} developed a formalism based on Clifford algebras for decoherence-free subspaces, a special class of quantum error correcting codes. Still from an algebraic point of view, symmetry groups have recently been investigated in quantum Boltzmann machines \cite{Song}

In this work, we analyze the underlying algebraic structure of quantum neural networks. From this point of view, we propose a quantum neural network based on Clifford algebras $Cl(2n,0)$ and $Cl(3,0)^{\otimes n}$. In Section 2 we present our models of the quantum neural networks and error analysis based on results of simulation Hamiltonian. Section 3 contains a algebraic generalization of quantum Fourier transform. Section 4 is devoted to the conclusions and perspectives. In Appendix A we review some basic concepts of Clifford algebras and in the Appendix B, we provide an example of a simple circuit based on a representation of Clifford algebra.

\section{Formulation}
Initially, our aim is to highlight that we can build arbitrary unitary matrices from Clifford algebras. The next lemma contains the analogous results obtained by Vlasov \cite{Vlasov} and Winter \cite{Winter}, however we use the reversion operator in a Clifford algebra.
\begin{lemma}
Let $Cl(2n,0) \equiv Cl(2n)$ be a Clifford algebra. There is a representation of this algebra whose basis induces a basis of vector space of Hermitian matrices $2^{n} \times 2^{n}$ over $\mathbb{R}$.
\end{lemma}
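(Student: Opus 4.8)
The plan is to exhibit the representation explicitly by the Jordan--Wigner (Brauer--Weyl) construction and then to split the resulting basis into Hermitian matrices with the help of the reversion operator. First I would define $\rho \colon Cl(2n) \to M_{2^{n}}(\mathbb{C})$ on the generators by $\rho(e_{2k-1}) = \sigma_{z}^{\otimes(k-1)} \otimes \sigma_{x} \otimes I^{\otimes(n-k)}$ and $\rho(e_{2k}) = \sigma_{z}^{\otimes(k-1)} \otimes \sigma_{y} \otimes I^{\otimes(n-k)}$ for $k = 1,\dots,n$; a one-line check shows that each $\rho(e_{i})$ is Hermitian, that $\rho(e_{i})^{2} = I$, and that $\rho(e_{i})\rho(e_{j}) + \rho(e_{j})\rho(e_{i}) = 0$ for $i \neq j$, so $\rho$ extends to an algebra homomorphism. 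Writing $\Gamma_{A} := \rho(e_{i_{1}})\cdots\rho(e_{i_{k}})$ for each ordered index set $A = \{i_{1} < \dots < i_{k}\} \subseteq \{1,\dots,2n\}$ (with $\Gamma_{\emptyset} = I$), the matrices $\Gamma_{A}$ are the images of the standard Clifford basis, so there are $\sum_{k=0}^{2n}\binom{2n}{k} = 4^{n}$ of them.

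Next I would invoke reversion, the anti-automorphism $x \mapsto \tilde{x}$ that reverses products of generators. Reordering $k$ distinct, pairwise anticommuting generators costs $\binom{k}{2}$ sign changes, so $\widetilde{e_{A}} = (-1)^{k(k-1)/2} e_{A}$ with $k = |A|$; and since the $\rho(e_{i})$ are Hermitian, $\Gamma_{A}^{\dagger} = \rho(e_{i_{k}})\cdots\rho(e_{i_{1}}) = \rho(\widetilde{e_{A}}) = (-1)^{k(k-1)/2}\,\Gamma_{A}$. Hence $\Gamma_{A}$ is Hermitian when $|A| \equiv 0,1 \pmod 4$ and anti-Hermitian when $|A| \equiv 2,3 \pmod 4$. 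Setting $H_{A} := \Gamma_{A}$ in the first case and $H_{A} := i\,\Gamma_{A}$ in the second produces $4^{n}$ Hermitian matrices, each a unimodular multiple of an image basis element.

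It then remains to show the $H_{A}$ are linearly independent over $\mathbb{R}$, which I would obtain from orthogonality in the Hilbert--Schmidt inner product $\langle X,Y\rangle = \mathrm{Tr}(X^{\dagger}Y)$. One has $\Gamma_{A}\Gamma_{B} = \pm\,\Gamma_{A\triangle B}$ (symmetric difference), and $\Gamma_{C}$ is traceless whenever $C \neq \emptyset$ --- inspecting the tensor slot of the highest-index generator occurring in $C$ shows that slot carries a traceless Pauli matrix --- while $\Gamma_{\emptyset} = I$ has trace $2^{n}$. Therefore $\langle H_{A},H_{B}\rangle = 0$ for $A \neq B$ and $\langle H_{A},H_{A}\rangle = 2^{n} \neq 0$, so the family $\{H_{A}\}$ is orthogonal, hence $\mathbb{R}$-linearly independent (in particular $\rho$ is faithful). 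Since the real vector space of Hermitian $2^{n}\times 2^{n}$ matrices has dimension $(2^{n})^{2} = 4^{n}$, the $4^{n}$ matrices $H_{A}$ form a basis of it, which is the claim. The only points needing care are the mod-$4$ bookkeeping for the reversion sign and phrasing "basis induces a basis" as meant up to the $\{\pm 1,\pm i\}$ rescalings forced by reversion; once those are pinned down the trace-orthogonality step makes the rest routine, so I do not expect a genuine obstacle.
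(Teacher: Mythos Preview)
Your proposal is correct and follows essentially the same approach as the paper: both use the Jordan--Wigner/Brauer--Weyl representation (yours differs from the paper's only by reversing the order of tensor factors) and both insert phases $1$ or $i$ according to the reversion sign $(-1)^{k(k-1)/2}$ to force Hermiticity. The one notable difference is that the paper simply asserts the resulting $4^{n}$ matrices are linearly independent, whereas you actually prove it via Hilbert--Schmidt orthogonality and the tracelessness of nonempty Pauli words --- so your argument is, if anything, more complete.
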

\begin{proof}
Consider the following representation \cite{Vlasov}
\begin{eqnarray}\label{Vlas}
\Gamma_{2k}&=&\underbrace{I \otimes \cdots I}_{n-k-1} \otimes \  \sigma_{x} \otimes \underbrace{\sigma_{z} \otimes \cdots \otimes \sigma_{z}}_{k}, \nonumber   \\
\Gamma_{2k+1}&=&\underbrace{I \otimes \cdots I}_{n-k-1} \otimes \ \sigma_{y} \otimes \underbrace{\sigma_{z} \otimes \cdots \otimes \sigma_{z}}_{k},
\end{eqnarray}
with $k=0,1,...,n-1$. These $2n$ matrices (generators of $Cl(2n)$) are linearly independent and Hermitian. Now we consider the $2^{2n}$ operators \cite{Winter}
\begin{eqnarray}
1, \ \Gamma_{j_1}, \ i \Gamma_{j_1} \Gamma_{j_2}, \  i \Gamma_{j_1} \Gamma_{j_2} \Gamma_{j_3}, \ \cdots, \ \omega  \Gamma_{j_1} \Gamma_{j_2}...\Gamma_{j_s},...,\  \cdots, \ \omega \Gamma_{j_1} \Gamma_{j_2},...\Gamma_{2n},
\end{eqnarray}
where $0 \leq j_1 \leq j_2 \leq \cdots \leq 2n-1$ and $\omega=i$ if $\widetilde{\Gamma}_{j_1, \cdots, j_{\zeta}}=(-1)^{\zeta(\zeta -1)/2}\Gamma_{j_1, \cdots, j_{\zeta}}=(-1)\Gamma_{j_1, \cdots, j_{\zeta}}$ and $\omega=1$ if $\widetilde{\Gamma}_{j_1, \cdots, j_{\zeta}}=(-1)^{\zeta(\zeta -1)/2}\Gamma_{j_1, \cdots, j_{\zeta}}=\Gamma_{j_1, \cdots, j_{\zeta}}$. The insertion of $i$ ensures that matrices are Hermitians. Hence using the representation (\ref{Vlas}) we have $2^{2n}$ linearly independent Hermitian matrices, i.e., a basis for the vector space of Hermitian matrices of dimension $2^{n} \times 2^{n}$.
\end{proof}

\begin{example}
For the $Cl(2)$, we have $I, \Gamma_{0}=\sigma_{x}, \Gamma_{1}=\sigma_{y}, \Gamma_{2}=i\sigma_{x}\sigma_{y}$.
\end{example}

\begin{example}
For the $Cl(2^{2})=Cl(4)$, the elements of basis are $1, \Gamma_{0}=I \otimes \sigma_{x}, \Gamma_{1}=I \otimes \sigma_{y}, \Gamma_{2}=\sigma_{x} \otimes \sigma_{z}, \Gamma_{3}=\sigma_{y} \otimes \sigma_{z}, i \Gamma_{0}\Gamma_{1}=i(I \otimes \sigma_{x}\sigma_{y}), i \Gamma_{0}\Gamma_{2}=i(\sigma_{x} \otimes \sigma_{x}\sigma_{z}), i \Gamma_{0}\Gamma_{3}=i(\sigma_{y} \otimes \sigma_{x}\sigma_{z}), i \Gamma_{1}\Gamma_{2}=i(\sigma_{x} \otimes \sigma_{y}\sigma_{z}), i \Gamma_{1}\Gamma_{3}=i(\sigma_{y} \otimes \sigma_{y}\sigma_{z}), i \Gamma_{2}\Gamma_{3}=i(\sigma_{x}\sigma_{y} \otimes I), i \Gamma_{0}\Gamma_{1}\Gamma_{2}=i(\sigma_{x} \otimes \sigma_{x}\sigma_{y}\sigma_{z}),
i \Gamma_{0}\Gamma_{2}\Gamma_{3}=i(\sigma_{y} \otimes \sigma_{x}\sigma_{y}\sigma_{z}), i \Gamma_{1}\Gamma_{2}\Gamma_{3}=i(\sigma_{x}\sigma_{y} \otimes \sigma_{y}),  \Gamma_{0}\Gamma_{1}\Gamma_{2}\Gamma_{3}=\sigma_{x} \sigma_{y} \otimes \sigma_{x}\sigma_{y}$.
\end{example}

Elements of Clifford algebras carry a natural geometric interpretation. The scalars and vectors have the standard interpretation. Bivectors can represent an oriented area or oriented angle associated to rotation and trivectors have an interpretation as an oriented volume. Furthermore, Clifford algebras generalize hypercomplex numbers, including real numbers, complex numbers and quaternions. These algebraic structures can be used in neural network architectures. \ \

Clifford quantum architectures match entanglement with geometric information of data. In the next definition, we can encode $2^{2n}$ neurons (coefficients $\alpha$) into $n$ qubits. This formulation is similar to the reference \cite{Shao}. However, our approach is based on Clifford algebras.

\begin{definition}
A Type I Clifford quantum perceptron (CQP-Type I) is defined by $(\vert x \rangle, \vert w \rangle, \vert y \rangle)$ where the input $\vert x \rangle$ is given by
\begin{eqnarray}
\vert x \rangle = \exp\left(i\sum_{j=0}^{2^{n}-1} \omega_{j} \alpha_{j} \Gamma_{j}^{(V)}\right) \vert 0 \rangle^{\otimes n},
\end{eqnarray}
The weight is
\begin{eqnarray}
\vert w \rangle = \exp\left(i\sum_{j=0}^{2^{n}-1} \omega_{j} \theta_{j} \Gamma_{j}^{(V)}\right) \vert 0 \rangle^{\otimes n}
\end{eqnarray}
and the output defined as
\begin{eqnarray}
\vert y \rangle = \exp(i  \omega_{j}  \phi \Gamma_{\mu}^{(V)}) \vert 0 \rangle^{\otimes n}.
\end{eqnarray}
with $\phi=\varphi (\langle x \vert w \rangle)$, where $\varphi$ is the activation function and $\Gamma_{\mu}^{(V)}$ stands for an arbitrary element of basis for the Clifford algebra $Cl(2n)$ and $\omega_{j}$ is defined in the Lemma 1.
\end{definition}

Note that these gates can generate entangled states since they are not necessarily factorable. This definition may not be interesting if all coefficients are non-zero $\alpha_{j}, \theta_{j}, \phi$ because the number of quantum logical gates to be applied grows exponentially with n, unless some approximation is made which we will discuss later. A better choice for polynomial growth is given by following definition, which is a particular case of Definition 3.

\begin{definition}
 A type II Clifford quantum perceptron (CQP-Type II) is defined by $(\vert x \rangle, \vert w \rangle, \vert y \rangle)$ where the input $\vert x \rangle$ is given by
\begin{eqnarray}
\vert x \rangle = \exp\left(i\sum_{j=0}^{2n-1}  \alpha_{j} \Gamma_{j}\right) \vert 0 \rangle^{\otimes n},
\end{eqnarray}
the weight is defined as
\begin{eqnarray}
\vert w \rangle = \exp\left(i\sum_{j=0}^{2n-1}  \theta_{j} \Gamma_{j}\right) \vert 0 \rangle^{\otimes n}
\end{eqnarray}
and the output as
\begin{eqnarray}
\vert y \rangle = \exp(i  \phi \Gamma_{\mu}) \vert 0 \rangle^{\otimes n}.
\end{eqnarray}
with $\phi=\varphi (\langle x \vert w \rangle)$, where $\varphi$ is the activation function.
\end{definition}

Here we encode $2n$ neurons into $n$ qubits. This scheme can be generalized into multilayer case. Consider the parameter $\theta_{m,j_{m}}$, where the index $m$ refers to the mth-layer, $j_{m}$ are the indices of neuron in the mth-layer so that we have
\begin{eqnarray}
\vert x \rangle_{m-1} &=& \exp\left(i\sum_{j=0}^{2n-1} \omega_{j} \alpha_{j,m-1} \Gamma_{j}\right) \vert 0 \rangle^{\otimes n},  \\
\vert w \rangle_{m, j_{m}} &=& \exp \left(i\sum_{j=0}^{2n-1} \omega_{j} \theta_{m,j_{m}} \Gamma_{j}\right) \vert 0 \rangle^{\otimes n},  \\
\vert y \rangle_{m} &=& \exp \left(i \sum_{j=0}^{2n-1} \omega_{j} \phi_{j,m} \Gamma_{j}\right) \vert 0 \rangle^{\otimes n}.
\end{eqnarray}
where $\phi_{j,m}=\arccos(\phi(_{m-1}\langle x \vert w \rangle_{m, j_{m}}))$. \ \

For the learning algorithm, we consider the training sample in the CQP-Type I
\begin{eqnarray}
\vert x^{(s)} \rangle &=& \exp\left(i\sum_{j=0}^{2n} \omega_{j} \alpha_{j}^{(s)} \Gamma_{j}^{(V)}\right) \vert 0 \rangle^{\otimes n}, \nonumber \\
\vert r^{(s)} \rangle &=& \exp\left(i  \beta^{(s)} \Gamma_{\mu}^{(V)} \right) \vert 0 \rangle^{\otimes n},
\end{eqnarray}
where $\vert r^{(s)} \rangle$ is the desired output. We have that
\begin{eqnarray}
\vert y^{(s)} \rangle &=& \exp\left(i  \phi^{(s)} \Gamma_{\mu}^{(V)}\right) \vert 0 \rangle^{\otimes n},
\end{eqnarray}
where $\phi^{(s)}=\arccos \varphi (\langle x^{(s)} \vert w \rangle)$. Thus, the quantum fidelity can be expressed as
\begin{eqnarray}
F^{(s)}&=&\vert \langle r^{(s)} \vert y^{(s)} \rangle \vert \nonumber \\
 &=& ^{\otimes n}\langle 0 \vert (\cos \phi^{(s)} I -i \sin \phi^{(s)} \Gamma_{\mu})(\cos \beta^{(s)} I -i \sin \beta^{(s)} \Gamma_{\mu}) \vert 0 \rangle^{\otimes n} \nonumber \\
 &=& \cos \beta^{(s)} ( \cos \phi^{(s)} + \sin \phi^{(s)}).
\end{eqnarray}
The learning rule is \cite{Shao}
\begin{eqnarray}
\theta_{j}(k+1)=\theta_{j}(k)+\eta\frac{\partial F^{(s)}(k)}{\partial \theta_{j}},
\end{eqnarray}
which results in
\begin{eqnarray}
\vert w(k+1) \rangle = \exp\left[i\sum_{j=0}^{2n} \left(\theta_{j}(k)+\eta\frac{\partial F^{(s)}(k)}{\partial \theta_{j}}\right)      \omega  \Gamma_{j}\right] \vert 0 \rangle^{\otimes n},
\end{eqnarray}
where $\eta$ is the learning-rate parameter of the back-propagation algorithm. \ \

We can determine $\langle \psi \vert \phi \rangle$ through the swap test. It consists in decomposition \cite{Walter}
\begin{eqnarray}
(C^{d})^{\otimes 2}=Sym^{2}(\mathbb{C}^{d})\oplus \Lambda^{2} (\mathbb{C}^{d}),
\end{eqnarray}
where $Sym^{2}(\mathbb{C}^{d})$ and $\Lambda^{2} (\mathbb{C}^{d})$ are symmetric and antisymmetric spaces, respectively. Consider the system in the state $\vert 0, \phi, \psi \rangle$. The Hadamard gate transform this state in
\begin{eqnarray}
\frac{1}{\sqrt{2}}(\vert 0, \phi, \psi \rangle + \vert 1, \phi, \psi \rangle).
\end{eqnarray}
Then, the controlled swap gate produces
\begin{eqnarray}
\frac{1}{\sqrt{2}}(\vert 0, \psi, \phi \rangle + \vert 1, \phi, \psi \rangle).
\end{eqnarray}
After a second application of Hadamard gate, we obtain
\begin{eqnarray}
\frac{1}{2} \vert 0 \rangle (\vert \psi, \phi \rangle + \vert \phi, \psi \rangle)+\frac{1}{2} \vert 1 \rangle (\vert \psi, \phi \rangle - \vert \phi, \psi \rangle).
\end{eqnarray}
If we perform a measurement on the first qubit, we get
\begin{eqnarray}
Pr(outcome =0)=\frac{1}{2}\left(1+\vert \langle \psi \vert \phi \rangle \vert^{2}\right).
\end{eqnarray}
By repeating $N$ times this procedure, we obtain
\begin{eqnarray}
\mid \langle \psi \vert \phi \rangle \mid=\left(1-2 \frac{\#\{outcome=0\}}{N}\right)^{1/2}.
\end{eqnarray}
\begin{definition}
Two Clifford quantum perceptrons $(\vert x' \rangle, \vert w' \rangle, \vert y' \rangle)$-type I and $(\vert x \rangle, \vert w \rangle, \vert y \rangle)$-Type II are called equivalent if $\vert y \rangle=\vert y' \rangle$.
\end{definition}

\begin{theorem}
Let $(\vert x' \rangle, \vert w' \rangle, \vert y' \rangle)$ and $(\vert x \rangle, \vert w \rangle, \vert y \rangle)$ be two Clifford quantum perceptrons of types I and II, respectively. Then every unitary transformation $U$ acting on CQP-Type II produces an equivalent CQP-Type I.
\end{theorem}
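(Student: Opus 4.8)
The plan is to reduce the statement to Lemma~1 together with the elementary fact that the matrix exponential maps the real vector space of Hermitian matrices \emph{onto} the unitary group. I read ``$U$ acting on CQP-Type~II'' as replacing the output $\vert y\rangle=\exp(i\phi\Gamma_{\mu})\vert 0\rangle^{\otimes n}$ of the given Type~II perceptron by $U\vert y\rangle$ (and, if desired, $\vert x\rangle\mapsto U\vert x\rangle$, $\vert w\rangle\mapsto U\vert w\rangle$); by Definition~6 it then suffices to exhibit a CQP-Type~I whose output coincides with $U\vert y\rangle$.

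First I would set $V:=U\,\exp(i\phi\Gamma_{\mu})$, which is a $2^{n}\times 2^{n}$ unitary matrix. By the spectral theorem every unitary can be written as $V=\exp(iH)$ for some Hermitian $H$ (diagonalize $V$ and take the obvious logarithm of its eigenvalues). Now invoke Lemma~1: the $2^{2n}$ operators $\{\,\omega_{\mu}\Gamma_{\mu}^{(V)}\,\}_{\mu=0}^{2^{2n}-1}$ --- the identity, the generators $\Gamma_{j}^{(V)}$, and the suitably phased products $\omega\,\Gamma_{j_1}^{(V)}\cdots\Gamma_{j_s}^{(V)}$ --- form a real basis of the space of Hermitian $2^{n}\times 2^{n}$ matrices. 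Hence there are real numbers $\phi_{\mu}$ with $H=\sum_{\mu=0}^{2^{2n}-1}\phi_{\mu}\,\omega_{\mu}\,\Gamma_{\mu}^{(V)}$, so that
\[
U\vert y\rangle \;=\; V\,\vert 0\rangle^{\otimes n} \;=\; \exp\!\Big(i\sum_{\mu=0}^{2^{2n}-1}\phi_{\mu}\,\omega_{\mu}\,\Gamma_{\mu}^{(V)}\Big)\,\vert 0\rangle^{\otimes n},
\]
which is exactly the generalized output form of a CQP-Type~I (cf.\ Definition~3 and the multilayer output displayed above).

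It remains to produce legitimate Type~I input and weight states so that the triple is a bona fide CQP-Type~I. Here I would exploit the freedom in the $2^{2n}$ parameters $\alpha'_{\mu},\theta'_{\mu}$: choose $\vert x'\rangle=\exp(i\sum_{\mu}\omega_{\mu}\alpha'_{\mu}\Gamma_{\mu}^{(V)})\vert 0\rangle^{\otimes n}$ and similarly $\vert w'\rangle$ so that the prescribed $\phi_{\mu}$ arise from the activation function applied to $\langle x'\vert w'\rangle$; or, more cheaply, simply note that Definition~6 constrains only the output vectors, so any Type~I perceptron realizing the displayed $\vert y'\rangle$ will do. Either way $(\vert x'\rangle,\vert w'\rangle,\vert y'\rangle)$ is a CQP-Type~I equivalent to the unitarily transformed Type~II perceptron.

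The main obstacle is conceptual rather than computational: one must be clear that the single-generator output $\exp(i\omega_{\mu}\phi\Gamma_{\mu}^{(V)})\vert 0\rangle^{\otimes n}$ literally written in Definition~3 is too rigid (such an operator has only the eigenvalues $e^{\pm i\phi}$), so the statement holds only when the Type~I output is read in its general multi-term form --- the form actually used in the multilayer and training-sample discussions. Granting that reading, the sole substantive input is Lemma~1; everything else is surjectivity of $H\mapsto e^{iH}$ onto the unitaries plus bookkeeping.
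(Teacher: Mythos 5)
Your proposal diverges from the paper's proof at the very first step, in how ``$U$ acting on CQP-Type II'' is read, and that divergence is precisely what creates the obstacle you flag at the end. In the paper, $U$ is applied to the \emph{input and weight}: $U\vert x\rangle=\vert x'\rangle$, $U\vert w\rangle=\vert w'\rangle$. The crux is then the invariance of the inner product, $\langle x'\vert w'\rangle=\langle x\vert U^{\dag}U\vert w\rangle=\langle x\vert w\rangle$, so the activation value $\phi$ is unchanged and the output is literally the same state, $\vert y'\rangle=\vert y\rangle$; equivalence in the sense of the paper's definition is then immediate, and no reinterpretation of the output form is needed, since the Type II output $\exp(i\phi\Gamma_{\mu})\vert 0\rangle^{\otimes n}$ is already a legitimate Type I output (Type II being a particular case of Definition 3). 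The only remaining point is that $\vert x'\rangle$ and $\vert w'\rangle$ have Type I form, and that is exactly where Lemma 1 enters: $U\exp\bigl(i\sum_{j}\alpha_{j}\Gamma_{j}\bigr)$ is unitary, hence equal to $\exp(iH)$ for a Hermitian $H$, hence equal to $\exp\bigl(i\sum_{\mu}\omega_{\mu}\alpha''_{\mu}\Gamma_{\mu}^{(V)}\bigr)$ by the basis property. You do use this Lemma-1-plus-surjectivity step, but you apply it to the wrong object.

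By instead letting $U$ act on the output, you are forced to realize an essentially arbitrary state $U\vert y\rangle$ as a Type I output, which, as you yourself concede, is impossible under the literal Definition 3 (one basis element $\Gamma_{\mu}^{(V)}$ and one scalar $\phi=\varphi(\langle x'\vert w'\rangle)$); your fix is to replace the definition by a ``general multi-term'' output, i.e.\ to prove a modified statement rather than the one in the paper. Even granting that reading, the construction is not closed: for the triple to be a perceptron at all, the output parameters must arise from the activation constraint applied to $\langle x'\vert w'\rangle$, and the remark that the equivalence definition ``constrains only the output vectors'' does not discharge this --- that definition fixes when two perceptrons are equivalent, not what counts as a valid Type I perceptron. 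The paper's reading avoids both difficulties at once, because the inner product, hence $\phi$, hence the output, is preserved by construction; the substantive content of the theorem is that invariance plus Lemma 1, not the representability of $U\vert y\rangle$.
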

\begin{proof}
First, let $U$ be an unitary transformation such that
\begin{eqnarray}
U \vert x \rangle &=& \vert x' \rangle, \nonumber \\
U \vert w \rangle &=& \vert w' \rangle. \nonumber \\
\end{eqnarray}
We have that
\begin{eqnarray}
\phi'=\arccos \varphi(\langle x' \vert w' \rangle)=\arccos \varphi(\langle x \vert U^{\dag}U \vert w \rangle)=\arccos \varphi(\langle x \vert w \rangle)=\phi.
\end{eqnarray}
Since $\vert y \rangle$ is uniquely determined by $\phi$, $\vert y' \rangle=\vert y \rangle$. However we need to verify if the conditions of Definition 1 are satisfied. Notice that
\begin{eqnarray}
\vert x' \rangle &=&U \exp\left(i\sum_{j=0}^{2n-1} \alpha_{j} \Gamma_{j}\right) \vert 0 \rangle^{\otimes n} \nonumber \\
&=& UU'\vert 0 \rangle^{\otimes n} \nonumber \\
&=& U''\vert 0 \rangle^{\otimes n}, \nonumber \\
\end{eqnarray}
since $U'$ is unitary by Lemma 1 (if $A$ is Hermitian, $\exp(iA)$ is unitary. Again, by Lemma 1 $U''$ can be expressed by $U''=\exp\left(i\sum_{j=0}^{2^{n}-1} \omega_{j} \alpha''_{j} \Gamma_{j}^{(V)}\right)$. Analogously for the $\vert w' \rangle$ and $\vert y' \rangle$ and we provided a proof of theorem.
\end{proof}

\

Notice that the reverse is not true. \ \

There are several algorithms for Hamiltonian simulation. We can apply these techniques in our proposal, especially related to Definition 1. Particularly, we analyze the product formula approach \cite{Childs}. The exponential of a sum of operators is approximated by a product of exponentials. For the k-local Hamiltonians \cite{Nielsen} (a sum of $L$ Hermitian terms acting upon at most $k$ qubits), $L$ is upper bounded by a polynomial in $n$ and we get better first order bounds \cite{Childs}. So we have the following proposition

\begin{proposition}
Let M be a POVM related to measurement and $P_{U}$ and $P_{V}$ be the probabilities of obtaining the associated measurement outcome as operations
$U=exp\left(-it \sum_{j=1}^{2^{2n}} \omega_{j} \eta_{j}\Gamma_{j}^{(V)} \right)$ and $V=\left[\prod_{j=1}^{2^{2n}} \exp\left(\frac{-it}{r}  \omega_{j} \eta_{j}\Gamma_{j}^{(V)}\right)\right]^{r}$. Then $\vert P_{U}-P_{V} \vert \leq \frac{(2^{2n}t\max_{j} \vert  \eta_{j} \vert)^{2}}{r} $, where $r \in \mathbb{N}$ and $t \in \mathbb{R}$
\end{proposition}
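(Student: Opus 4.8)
The plan is to reduce the statement to a standard first-order product-formula (Lie--Trotter) error estimate in operator norm, and then to transfer that estimate to the measurement probabilities; the bound holds uniformly over the (unstated) input state $\rho$ on which $U$ and $V$ act.

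First I would record the elementary fact that makes the constant come out right: each $\Gamma_j^{(V)}$ is a tensor product of Pauli matrices, hence simultaneously unitary and Hermitian, so its eigenvalues are $\pm 1$ and $\|\Gamma_j^{(V)}\| = 1$ in the operator norm; since $|\omega_j| = 1$ as well (it is $1$ or $i$ by Lemma~1), the summands $H_j := \omega_j \eta_j \Gamma_j^{(V)}$ satisfy $\|H_j\| = |\eta_j| \le \max_k |\eta_k|$, whence $\sum_{j=1}^{2^{2n}} \|t H_j\| \le 2^{2n}\, t\, \max_k |\eta_k| =: \Lambda$. Writing $H = \sum_j H_j$ so that $U = e^{-itH}$, one sees that $V$ is exactly the $r$-fold iterate of the first-order product formula for the time slice $t/r$.

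Next I would invoke the first-order product-formula bound in the form used in \cite{Childs} (see also \cite{Nielsen}): for a single slice, the commutator estimate $\|e^{-iA}e^{-iB} - e^{-i(A+B)}\| \le \tfrac12\|[A,B]\| \le \|A\|\,\|B\|$ for anti-Hermitian exponents, together with induction on the number of factors, gives $\bigl\| \prod_{j} e^{-i(t/r)H_j} - e^{-i(t/r)H} \bigr\| \le \tfrac12\bigl(\sum_j \|(t/r)H_j\|\bigr)^2 = \tfrac{\Lambda^2}{2r^2}$. Telescoping over the $r$ identical slices, and using that left/right multiplication by a unitary preserves the operator norm, yields $\|U - V\| \le r\cdot \tfrac{\Lambda^2}{2r^2} = \tfrac{\Lambda^2}{2r}$.

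Finally I would pass from unitaries to outcome probabilities. For a POVM element $M$ with $0 \le M \le I$ and input state $\rho$ we have $P_U = \mathrm{Tr}(M\, U\rho U^{\dagger})$ and $P_V = \mathrm{Tr}(M\, V\rho V^{\dagger})$; adding and subtracting $U\rho V^{\dagger}$ and using the triangle inequality for the trace norm together with $\|\rho\|_1 = 1$ and $\|U\| = \|V\| = 1$ gives $|P_U - P_V| \le \|U\rho U^{\dagger} - V\rho V^{\dagger}\|_1 \le 2\|U-V\|$. Combining this with the previous step produces $|P_U - P_V| \le \Lambda^2/r = (2^{2n}\, t\, \max_j |\eta_j|)^2 / r$, as claimed. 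The only genuine obstacle here is bookkeeping of constants: one has to check that the factor $\tfrac12$ from the single-slice commutator bound cancels against the factor $2$ from the states-to-probabilities step, and that every $\omega_j\Gamma_j^{(V)}$ really contributes unit operator norm so that the crude estimate $\sum_j \|H_j\| \le 2^{2n}\max_j|\eta_j|$ is legitimate; the remainder is a routine triangle-inequality and telescoping argument.
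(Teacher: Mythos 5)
Your proof is correct, and it takes a genuinely different (and in fact tighter) route than the paper. The paper's own proof is essentially a citation: it bounds $\vert P_U-P_V\vert$ by $E(U,V)=\max_{\vert\psi\rangle}\Vert(U-V)\vert\psi\rangle\Vert$ via Nielsen--Chuang's Box 4.1 and then bounds $E(U,V)$ by the analytic first-order product-formula estimate of Childs \emph{et al.} (Proposition F.3), which yields $\frac{(2^{2n}t\max_j\vert\eta_j\vert)^2}{r}\exp\!\left(\frac{2^{2n}\max_j\vert\eta_j\vert\,\vert t\vert}{r}\right)$; note that the cited Box 4.1 statement actually carries a factor $2$, and the Childs bound carries the extra exponential, so as written the paper's argument establishes the proposition's clean bound only up to those factors (the text after the proposition simply refers to $\frac{(2^{2n}t\max_j\vert\eta_j\vert)^2}{r}$ as the analytic bound). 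You instead prove the operator-norm estimate from scratch: the per-slice commutator bound $\Vert e^{-iA}e^{-iB}-e^{-i(A+B)}\Vert\le\frac12\Vert[A,B]\Vert\le\Vert A\Vert\,\Vert B\Vert$ (valid here since each $\omega_j\eta_j\Gamma_j^{(V)}$ is Hermitian by Lemma 1, with $\Vert\omega_j\eta_j\Gamma_j^{(V)}\Vert=\vert\eta_j\vert$), an induction over the $2^{2n}$ factors giving $\frac{1}{2}\bigl(\sum_j\Vert(t/r)H_j\Vert\bigr)^2$ per slice, telescoping over the $r$ slices, and then the Hölder/triangle argument $\vert P_U-P_V\vert\le\Vert U\rho U^{\dag}-V\rho V^{\dag}\Vert_1\le 2\Vert U-V\Vert$; the $\tfrac12$ and the $2$ cancel and you land exactly on $\frac{(2^{2n}t\max_j\vert\eta_j\vert)^2}{r}$. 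What each approach buys: the paper's is a one-line appeal to known simulation bounds and generalizes immediately to higher-order formulas, while yours is elementary, self-contained, free of the spurious exponential factor, and is the argument that actually matches the stated constant (only cosmetic remark: write $\vert t\vert$ when defining $\Lambda$, though the final bound involves $t^2$ so nothing changes).
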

\begin{proof}
The proof is an application of the results of references \cite{Nielsen} (Box 4.1, p. 195)) and \cite{Childs} (Proposition F.3). We have that
\begin{eqnarray}
\vert P_{U}-P_{V} \vert &\leq &  E(U,V) \nonumber \\
&\leq& \frac{(L \Lambda t)^{2}}{r}\exp \left(\frac{L \Lambda \vert t \vert}{r}\right)
\end{eqnarray}
where $E \equiv \max_{\vert \psi \rangle} \vert \vert (U- V) \vert \psi \rangle \vert \vert $. Let $H_{j}$ be an Hermitian operator \cite{Childs} and $\Lambda = \max_{j} \vert \vert H_{j} \vert \vert$. In this case $H_{j}=\eta_{j} \omega_{j} \Gamma_{j}^{(V)}$ and $\Gamma_{j}^{(V)}$ is unitary so that $\Lambda= \max_{j} \vert  \eta_{j} \vert$. Hence
\begin{eqnarray}
 \left\vert \left\vert \exp \left(-it \sum_{j=1}^{2^{2n}} \omega_{j} \eta_{j} \Gamma_{j}^{(V)}\right) - \left[\prod_{j=1}^{2^{2n}} \exp \left(\frac{-it}{r} \omega \eta_{j} \Gamma_{j}^{(V)}\right)  \right]^{r} \right\vert \right\vert \leq \frac{(2^{2n}t\max_{j} \vert  \eta_{j} \vert)^{2}}{r} \exp \left(\frac{2^{2n} \max_{j} \vert  \eta_{j} \vert \vert t \vert}{r}\right) \nonumber
\end{eqnarray}
\end{proof}

Notice that the choice $H_{j}=\omega \eta_{j}\Gamma_{j}^{V}$ is the worst scenario with analytic error bound $\frac{(2^{2n}t\max_{j} \vert  \eta_{j} \vert)^{2}}{r}$ for the fist-order product formula. The direct applying of definition of matrix multiplication give complexity $\Theta (2^{3n})$. However with $r=2n$, we obtain $\Theta (2n)$ for each operation in the CQP-Type II.

\begin{proposition}
The number of pairs of non-commuting elements for the basis of Clifford algebra $Cl(2n,0)$ is given by
\begin{eqnarray}
\Omega = \sum_{p,q; p < q}^{2n} \#_{p,q} \left({2n \choose p}{2n \choose q}\right),
\end{eqnarray}
where $ \#_{p,q}$ stands for a number of elements for which in the pairs $(p; q)$ ($p$-vector and $q$-vector spaces) we have, in the cases
\
(a)(odd; even) or (even; even) - an odd number of generators appearing simultaneously in the p-vector and q-vector;
\
(b)(odd; odd) - an even number of generators appearing simultaneously in the p-vector and q-vector.
\end{proposition}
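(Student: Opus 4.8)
The plan is to reduce everything to the elementary fact that any two basis \emph{blades} of $Cl(2n,0)$ either commute or anticommute, with a sign fixed by their grades and the size of their overlap. Index the standard basis by subsets $S\subseteq\{0,1,\dots,2n-1\}$, writing $\Gamma_S=\prod_{i\in S}\Gamma_i$ with the factors in increasing order of index and $\Gamma_\varnothing=1$; the $p$-vector subspace is spanned by the $\Gamma_S$ with $|S|=p$ and hence has dimension $\binom{2n}{p}$. Passing from $\Gamma_S$ to the Hermitian normalisation $\omega\,\Gamma_S$ of Lemma~1 only multiplies a blade by a central scalar, so it does not affect commutation and it is enough to work with the $\Gamma_S$.

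First I would prove the sign rule $\Gamma_S\,\Gamma_T=(-1)^{\,|S|\,|T|-|S\cap T|}\,\Gamma_T\,\Gamma_S$. For a single generator $\Gamma_i$ and a blade $\Gamma_T=\Gamma_{j_1}\cdots\Gamma_{j_k}$, conjugating $\Gamma_i$ by $\Gamma_{j_m}$ yields $-\Gamma_i$ when $j_m\neq i$ and $+\Gamma_i$ when $j_m=i$ (using $\Gamma_i\Gamma_j=-\Gamma_j\Gamma_i$ for $i\neq j$ and $\Gamma_i^2=1$), so $\Gamma_T\,\Gamma_i\,\Gamma_T^{-1}=(-1)^{\,|T|-[i\in T]}\,\Gamma_i$, where $[\,\cdot\,]$ is the indicator; equivalently $\Gamma_i\,\Gamma_T=(-1)^{\,|T|-[i\in T]}\,\Gamma_T\,\Gamma_i$. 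Commuting the $|S|$ generators of $\Gamma_S$ through $\Gamma_T$ one at a time then multiplies by $\prod_{i\in S}(-1)^{\,|T|-[i\in T]}=(-1)^{\,|S|\,|T|-|S\cap T|}$, which is the claim. In particular $\Gamma_S$ and $\Gamma_T$ fail to commute precisely when $|S|\,|T|-|S\cap T|$ is odd, and then they anticommute; this dichotomy is the heart of the proposition.

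Next I would translate this into the verbal cases. With $p=|S|$, $q=|T|$, $r=|S\cap T|$: if at least one of $p,q$ is even then $pq$ is even, so $pq-r$ is odd iff $r$ is odd --- exactly the ``(odd;\,even)'' and ``(even;\,even)'' clause (a); if both $p$ and $q$ are odd then $pq$ is odd, so $pq-r$ is odd iff $r$ is even --- exactly clause (b). Finally, for fixed $p<q$ and an admissible overlap $r$, the number of ordered pairs $(S,T)$ with $|S|=p$, $|T|=q$, $|S\cap T|=r$ is $\binom{2n}{p}\binom{p}{r}\binom{2n-p}{q-r}$: choose $S$, then the $r$ shared generators inside $S$, then the remaining $q-r$ generators of $T$ among the $2n-p$ indices outside $S$. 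Summing over the admissible $r$ and applying Vandermonde's identity $\sum_{r}\binom{p}{r}\binom{2n-p}{q-r}=\binom{2n}{q}$ repackages the contribution of the grade pair $(p,q)$ as $\#_{p,q}\binom{2n}{p}\binom{2n}{q}$, where $\#_{p,q}:=\sum_{r\ \text{admissible}}\binom{p}{r}\binom{2n-p}{q-r}\big/\binom{2n}{q}$ is precisely the proportion of $(p,q)$-blade pairs obeying (a) or (b). Because $p\neq q$, every non-commuting pair $\{\Gamma_S,\Gamma_T\}$ with one factor of each grade is counted exactly once, and $\Gamma_\varnothing$ commutes with everything; summing over $p<q$ then yields $\Omega$ (the diagonal terms $p=q$, if one wishes to include them, follow from the same count with an extra factor $\tfrac12$).

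The only real content is the sign rule in the second paragraph; the rest is bookkeeping. The step most likely to cause slips is keeping the parity of $|S|\,|T|-|S\cap T|$ aligned with the verbal cases (a)--(b) --- an easy place to drop a factor --- together with the enumeration of subset pairs with a prescribed intersection size; both are routine once the sign rule is in hand.
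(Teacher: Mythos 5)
Your proposal is correct and follows essentially the same route as the paper: both arguments reduce non-commutativity of two basis blades to the parity of the number of generator anticommutations, corrected by the size of the overlap, and then sort the grade pairs into the cases (a) and (b) before counting blade pairs grade by grade. Your write-up is in fact tighter than the paper's --- the explicit sign rule $\Gamma_S\Gamma_T=(-1)^{|S||T|-|S\cap T|}\Gamma_T\Gamma_S$ covers all parity cases at once (the paper works out only the (even; odd) case and asserts the rest are similar), and your Vandermonde count makes the otherwise loosely defined quantity $\#_{p,q}$ precise.
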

\begin{proof}
The number of elements in a $p$-vector is given by ${number \ of \ generators \choose q}$ so that we must compute all possible combinations of $p$-vector and $q$-vector:
\begin{eqnarray}
&=&{2n \choose 1}{2n \choose 1}+{2n \choose 1}{2n \choose 2}+ \cdots {2n \choose 1}{2n \choose 2n} \nonumber \\
&+&{2n \choose 2}{2n \choose 1}+{2n \choose 2}{2n \choose 2}+ \cdots +{2n \choose 2}{2n \choose 2n} \nonumber \\
& \vdots & \nonumber \\
&+&{2n \choose 2n}{2n \choose 1}+{2n \choose 2n}{2n \choose 2}+ \cdots +{2n \choose 2n}{2n \choose 2n} \nonumber \\
\end{eqnarray}
Most of these terms appear repeatedly so that we should only count them once. In order for the elements of the basis do not commute, we need an odd number of anticommutations of the generators in the product of a p-vector and a q-vector, $\gamma_{i_{1}}\gamma_{i_{2}},...\gamma_{i_{p}}\gamma_{j_{1}}\gamma_{j_{2}},...\gamma_{j_{q}}$, i.e
\begin{eqnarray}
\gamma_{i_{k}}\gamma_{j_{1}}\gamma_{j_{2}},...\gamma_{j_{q}}=-\gamma_{j_{1}}\gamma_{j_{2}},...\gamma_{j_{q}}\gamma_{i_{k}}
\end{eqnarray}
 an odd number of times. This occurs if we have an odd number of generators of the p-vector (which do not appear simultaneously in the q-vector) and an odd number of generators of the q-vector. Consider the case (even; odd) and suppose initially that we have an even number of generators that appear simultaneously in both the p-vector and q-vector. Consequently, we have an even number of generators (which appear simultaneously in the multivectors) that anticommute and another even number of vectors (which do not appear simultaneously in the multivectors) that anticommute with an odd number of generators of the q-vector. So we have an even total number of anticommutations and therefore the base elements commute. Consider now an odd number of generators that appear simultaneously in both the p-vector and q-vector. So we have an odd number of generators (which appear simultaneously in the multivectors) that anticommute with an even number of generators and another odd number of generators that anticommute with an odd number of generators. Consequently, we have an odd total number of anticommutations so that the basis elements anticommute.
The development is similar for the other cases.
\end{proof}

A corollary of this proposition is the commutator bounds of errors. An application of bound for the first-order formula obtained by \cite{Childs} (Theorem F.5)
\begin{eqnarray}
\left\vert \left\vert \exp \left(-it \sum_{j=1}^{2^{2n}} \omega_{j} \eta_{j} \Gamma_{j}^{(V)}\right) - \left[\prod_{j=1}^{2^{2n}} \exp \left(-it \omega_{j} \eta_{j} \Gamma_{j}^{(V)}\right)  \right]^{r} \right\vert \right\vert \nonumber \\
\leq \Omega \frac{(\max_{j} \vert \eta_{j}  \vert t)^{2}}{r}+ \frac{(2^{2n} \vert t \vert^{3}\max_{j} \vert  \eta_{j} \vert)^{2}}{3r^{2}} exp \left(\frac{2^{2n} \max_{j} \vert  \eta_{j} \vert \vert t \vert}{r}\right) \nonumber
\end{eqnarray}
for the CQP-Type I. In the case where we are considering only the algebra generators as for the CQP-Type II, we have
\begin{eqnarray}
\left\vert \left\vert \exp \left(-it \sum_{j=1}^{2n}  \eta_{j} \Gamma_{j}\right) - \left[\prod_{j=1}^{2n} \exp \left(-it  \eta_{j} \Gamma_{j}\right)  \right]^{r} \right\vert \right\vert \nonumber \\
\leq 2n \frac{(\max_{j} \vert \eta_{j}  \vert t)^{2}}{r}+ \frac{(2n \vert t \vert^{3}\max_{j} \vert  \eta_{j} \vert)^{2}}{3r^{2}} exp \left(\frac{2n \max_{j} \vert  \eta_{j} \vert \vert t \vert}{r}\right) \nonumber
\end{eqnarray}

A second proposal capable of allowing the implementation of an arbitrary activation function through functions of operators is given by following definition

\begin{definition}
A Clifford quantum neural network is given by unitary operators $U^{Cliff}(x), U^{Cliff}(\theta)$, an state $\vert 0 \rangle ^{\otimes n}$, a Hermitian operator $A_{k}$, a real activation function $\phi$ and a output $\langle x; \theta \vert \phi(A_{k}) \vert x; \theta \rangle$  defined as
\begin{eqnarray}
\vert x \rangle = U^{Cliff}(\theta)U^{Cliff}(x)\vert 0 \rangle^{\otimes n} = \sum_{i=1}^{n} a_{i; \theta} \vert x_{i} \rangle,
\end{eqnarray}
\begin{eqnarray}
A_{k}= \sum_{i} \vert a_{i; \theta} \vert \vert x_{i} \rangle \langle  x_{i} \vert
\end{eqnarray}
and
\begin{eqnarray}
\phi (A_{k})= \sum_{j}\phi (\vert a_{j; \theta} \vert ) \vert x_{j}^{k} \rangle \langle  x_{j}^{k} \vert
\end{eqnarray}
with $\sum_{i} \vert a_{i; \theta} \vert^{2}=1$.
\end{definition}
Notice that $\phi(A_{k})$ is also a Hermitian operator. The superscript index in the unitary operators means that they are constructed through elements of Clifford algebra, which can reflect the geometric properties of the system.
Another possibility is given by
\begin{definition}
A Clifford quantum neural network is given by unitary operators $U^{Cliff}(x), U^{Cliff}(y), U^{Cliff}(\theta)$, the states $\vert 0 \rangle ^{\otimes n}$, $\vert y_{out} \rangle$ and $\vert y_{out,d} \rangle$ a Hermitian operator $A= \sum_{j} \vert a_{j}(x_{j}; \theta)\vert \vert x_{j} \rangle \langle  x_{j} \vert $, a real activation function $\phi$ and a output defined as
\begin{eqnarray}
\vert x \rangle = U^{Cliff}(\theta)U^{Cliff}(x)\vert 0 \rangle^{\otimes n} = \sum_{i=1}^{n} a_{i; \theta} \vert x_{i} \rangle,
\end{eqnarray}
\begin{eqnarray}
 A= \sum_{j} \vert a_{j}(x_{j}; \theta) \vert \vert x_{j} \rangle \langle  x_{j} \vert,
\end{eqnarray}
\begin{eqnarray}
\phi (A)= \sum_{j}\phi (\vert a_{j; \theta} \vert) \vert x_{j} \rangle \langle  x_{j} \vert
\end{eqnarray}
\begin{eqnarray}
\vert y_{out} \rangle = \phi(A) \vert x_{i} \rangle =\frac{1}{N} \sum_{j}\phi (\vert a_{j; \theta} \vert ) \vert x_{j} \rangle \langle  x_{j} \vert x_{i} \rangle,
\end{eqnarray}
\begin{eqnarray}
\vert y_{out,d} \rangle = U^{Cliff}(y) \vert 0 \rangle ^{\otimes n},
\end{eqnarray}
where  $\sum_{i} \vert a_{i; \theta} \vert^{2}=1$ $N= \sqrt{\langle x_{i} \vert f^{\dag}f(A) \vert x_{i}}$.
\end{definition}
In this scheme, we may use the quantum fidelity $F=\langle y_{out}(\theta) \vert y_{out,d} \rangle$ as cost function. \\

All these proposals for quantum neural networks based on clifford algebras may be trained by a classical optimization algorithms as in several other proposals for hybrid algorithms \cite{Maria}. An example of a circuit for implementing unitary operations is given in Appendix B.  \ \

Alternatively, we can build the Hermitian operators from a basis of space $[\bigwedge^{0} \mathbb{R}^{3,0} \oplus \bigwedge^{1}\mathbb{R}^{3,0}]^{\otimes n}$ that corresponds to subspace of tensor product of Clifford algebras $Cl(3,0)^{\otimes n}$.
\begin{proposition}
The subspace $[\bigwedge^{0} \mathbb{R}^{3,0} \oplus \bigwedge^{1}\mathbb{R}^{3,0}]^{\otimes n}  \subset Cl(3,0)^{\otimes n} $ is isomorphic to the space vector of Hermitian matrices $2^{n} \times 2^{n}$.
\end{proposition}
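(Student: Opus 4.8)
The plan is to produce an explicit $\mathbb{R}$-linear isomorphism by transporting the problem into matrices through the Pauli representation of $Cl(3,0)$. First I would fix the representation $\rho\colon Cl(3,0)\to M_2(\mathbb{C})$ determined by $\rho(e_1)=\sigma_x$, $\rho(e_2)=\sigma_y$, $\rho(e_3)=\sigma_z$; since $\sigma_x^2=\sigma_y^2=\sigma_z^2=I$ and the three Pauli matrices pairwise anticommute, the defining (universal) property of the Clifford algebra makes $\rho$ a homomorphism of real algebras. It is an isomorphism: the $2^3$ monomials $1,\,e_i,\,e_ie_j,\,e_1e_2e_3$ are sent, up to signs and factors of $i$, to $I,\sigma_i,\sigma_k,I$, which together form a real basis of $M_2(\mathbb{C})$, so $\rho$ is surjective, and $\dim_{\mathbb{R}}Cl(3,0)=8=\dim_{\mathbb{R}}M_2(\mathbb{C})$ then forces bijectivity. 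Taking the $n$-fold tensor power yields an isomorphism of real algebras $\rho^{\otimes n}\colon Cl(3,0)^{\otimes n}\xrightarrow{\ \sim\ }M_2(\mathbb{C})^{\otimes n}\cong M_{2^n}(\mathbb{C})$; in particular it is injective.

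Next I would identify the image of the distinguished subspace. One has $\bigwedge^0\mathbb{R}^{3,0}\oplus\bigwedge^1\mathbb{R}^{3,0}=\mathrm{span}_{\mathbb{R}}\{1,e_1,e_2,e_3\}$, whose image under $\rho$ is $\mathrm{span}_{\mathbb{R}}\{I,\sigma_x,\sigma_y,\sigma_z\}$; this is precisely the real vector space $\mathcal{H}_2$ of Hermitian $2\times2$ matrices, since $\{I,\sigma_x,\sigma_y,\sigma_z\}$ is the standard real basis of the $4$-dimensional space $\mathcal{H}_2$. Consequently $\rho^{\otimes n}$ maps $[\bigwedge^0\mathbb{R}^{3,0}\oplus\bigwedge^1\mathbb{R}^{3,0}]^{\otimes n}$ bijectively onto $\mathcal{H}_2^{\otimes n}:=\mathrm{span}_{\mathbb{R}}\{\sigma_{\mu_1}\otimes\cdots\otimes\sigma_{\mu_n}:\mu_k\in\{0,x,y,z\}\}$, where $\sigma_0:=I$. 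Each such Pauli string is Hermitian (a tensor product of Hermitian matrices is Hermitian), and the $4^n$ strings are linearly independent — they are the generalized Pauli basis of $M_{2^n}(\mathbb{C})$ — so $\mathcal{H}_2^{\otimes n}$ is a $4^n$-dimensional real subspace of the space $\mathcal{H}_{2^n}$ of Hermitian $2^n\times2^n$ matrices.

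The argument then closes with a dimension count: a Hermitian $N\times N$ matrix is determined by $N$ real diagonal entries together with $\binom{N}{2}$ complex super-diagonal entries, so $\dim_{\mathbb{R}}\mathcal{H}_N=N+2\binom{N}{2}=N^2$; with $N=2^n$ this gives $\dim_{\mathbb{R}}\mathcal{H}_{2^n}=4^n$. Hence $\mathcal{H}_2^{\otimes n}$, being a $4^n$-dimensional subspace of the $4^n$-dimensional space $\mathcal{H}_{2^n}$, equals $\mathcal{H}_{2^n}$. Combining this with the previous paragraph, the restriction of $\rho^{\otimes n}$ to $[\bigwedge^0\mathbb{R}^{3,0}\oplus\bigwedge^1\mathbb{R}^{3,0}]^{\otimes n}$ is an $\mathbb{R}$-linear isomorphism onto $\mathcal{H}_{2^n}$, which is exactly the assertion.

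I expect the one step carrying genuine content — as opposed to bookkeeping — to be showing that the algebraic tensor power $\mathcal{H}_2^{\otimes n}$, realized inside $M_{2^n}(\mathbb{C})$, exhausts \emph{all} Hermitian matrices and not merely a proper subspace; this is exactly where the linear independence of the Pauli strings and the formula $\dim_{\mathbb{R}}\mathcal{H}_{2^n}=4^n$ are indispensable. The remaining ingredients — that $\rho$ is an isomorphism of real algebras, that tensoring preserves Hermiticity, and the grade decomposition of $Cl(3,0)$ — are standard and, where needed, already available from Appendix A and Lemma 1.
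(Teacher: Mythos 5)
Your proof is correct and follows essentially the same route as the paper's: represent $\bigwedge^{0}\mathbb{R}^{3,0}\oplus\bigwedge^{1}\mathbb{R}^{3,0}$ by $\{I,\sigma_x,\sigma_y,\sigma_z\}$, observe that the $4^n$ resulting Pauli strings are linearly independent Hermitian matrices, and conclude by the dimension count $\dim_{\mathbb{R}}\mathcal{H}_{2^n}=2^{2n}$. You merely spell out details the paper leaves implicit (the universal-property construction of $\rho$ and the explicit computation of $\dim_{\mathbb{R}}\mathcal{H}_{2^n}$), which is fine.
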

\begin{proof}
A basis of subspace $[\bigwedge^{0} \mathbb{R}^{3,0} \oplus \bigwedge^{1}\mathbb{R}^{3,0}]^{\otimes n}  \subset Cl(3,0)^{\otimes n}$ is given by $\{1, \gamma_{1}, \gamma_{2}, \gamma_{3}\}^{\otimes n}$, for which the elements have a representation in terms of matrices $\{I, \sigma_{x}, \sigma_{y}, \sigma_{z}\}^{\otimes n}$. Given a vector space $V$ of dimension $2^{2n}$, every linearly independent subset of $2^{2n}$ elements is a basis of $V$. The $2^{2n}$ elements are linearly independent and Hermitian matrices. Therefore they are form a basis of space of Hermitian matrices $2^{n} \times 2^{n}$.
\end{proof}

We will show in the next section that a formulation based on the tensor product of  Clifford algebras $Cl(3,0)^{\otimes n}$ is associated with a proposal of generalization of the quantum Fourier transform, which can be used for our proposal of quantum neural networks.

\section{Generalized quantum fourier transform}

Unitary operators for the quantum neural networks based on Clifford algebras can also be obtained from a generalization of quantum fourier transform given by
\begin{eqnarray}
\vert j \rangle \mapsto \frac{1}{2^{n/2}} \sum_{k=0}^{2^{n}-1}e^{2\pi ij\frac{k}{2^n}+i \theta \Gamma_{k}} \vert k \rangle,
\end{eqnarray}
where
\begin{eqnarray}
\Gamma_{k}=(\alpha_{x_{1}}^{k_{1}}\sigma_{x}+\alpha_{y_{1}}^{k_{1}}\sigma_{y}+\alpha_{z_{1}}^{k_{1}}\sigma_{z})\otimes  \cdots \otimes I+ \cdots +I \otimes \cdots  \otimes (\alpha_{x_{n}}^{k_{n}}\sigma_{x}+\alpha_{y_{n}}^{k_{n}}\sigma_{y}+\alpha_{z_{n}}^{k_{n}}\sigma_{z}),
\end{eqnarray}
with $(\alpha_{x_{i}}, \alpha_{y_{i}}, \alpha_{z_{i}})$ is a unit vector. Each $k$ corresponds to a set $(k_{1},\cdots,k_{i}, \cdots,k_{n})$, $k_{i}=0,1$. $k_{i}$ is not an exponent, it is an index.
Then
\begin{eqnarray}
\vert j \rangle &\mapsto&\frac{1}{2^{n/2}} \sum_{k_{1}=0}^{1} \cdots \sum_{k_{n}=0}^{1}e^{2\pi ij(\sum_{l=1}^{n}k_{l}2^{-l})+i \theta [ (\alpha_{x_{1}}^{k_{1}}\sigma_{x}+\alpha_{y_{1}}^{k_{1}}\sigma_{y}+\alpha_{z_{1}}^{k_{1}}\sigma_{z})\otimes  \cdots \otimes I+ \cdots +I \otimes \cdots  \otimes (\alpha_{x_{n}}^{k_{n}}\sigma_{x}+\alpha_{y_{n}}^{k_{n}}\sigma_{y}+\alpha_{z_{n}}^{k_{n}}\sigma_{z})] } \nonumber \\
&\times& \vert k_{1} \cdots k_{n} \rangle. \nonumber
\end{eqnarray}
This expression can be factored
\begin{eqnarray}
 \frac{1}{2^{n/2}} \sum_{k_{1}=0}^{1} \cdots \sum_{k_{n}=0}^{1}e^{2\pi ij(\sum_{l=1}^{n}k_{l}2^{-l})}e^{i \theta[ (\alpha_{x_{1}}^{k_{1}}\sigma_{x}+\alpha_{y_{1}}^{k_{1}}\sigma_{y}+\alpha_{z_{1}}) \otimes \cdots
\otimes I]} \cdots e^{i \theta[I \otimes... \otimes (\alpha_{x_{n}}^{k_{n}}\sigma_{x}+\alpha_{y_{n}}^{k_{n}}\sigma_{y}+\alpha_{z_{n}}^{k_{n}}\sigma_{z})]} \vert k_{1} \cdots k_{n} \rangle \nonumber
\end{eqnarray}

\begin{eqnarray}
&=&\frac{1}{2^{n/2}} \sum_{k_{1}=0}^{1} \cdots \sum_{k_{n}=0}^{1}e^{2\pi ij(\sum_{l=1}^{n}k_{l}2^{-l})} [e^{i \theta (\alpha_{x_{1}}^{k_{1}}\sigma_{x}+\alpha_{y_{1}}^{k_{1}}\sigma_{y}+\alpha_{z_{1}}^{k_{1}}\sigma_{z})} \otimes I \cdots \otimes I] \cdots \nonumber \\
&\times &[I \otimes  \cdots \otimes e^{i\theta (\alpha_{x_{n}}^{k_{n}}\sigma_{x}+\alpha_{y_{n}}^{k_{n}}\sigma_{y}+\alpha_{z_{n}}^{k_{n}}\sigma_{z})}] \vert k_{1} \cdots k_{n} \rangle \nonumber \\
&=&\frac{1}{2^{n/2}}\sum_{k_{1}=0}^{1} \cdots \sum_{k_{n}=0}^{1}e^{2 \pi j k_{1}2^{-1}}c^{i \theta (\alpha_{x_{1}}^{k_{1}}\sigma_{x}+\alpha_{y_{1}}^{k_{1}}\sigma_{y}+\alpha_{z_{1}}^{k_{1}}\sigma_{z})} \otimes \cdots \nonumber \\
&\otimes & e^{2 \pi j k_{n}2^{-n}}c^{i \theta (\alpha_{x_{n}}^{k_{n}}\sigma_{x}+\alpha_{y_{n}}^{k_{n}}\sigma_{y}+\alpha_{z_{n}}^{k_{n}}\sigma_{z})} \vert k_{1} \cdots k_{n} \rangle
\end{eqnarray}

\begin{eqnarray}
&=&\frac{1}{2^{n/2}} \bigotimes_{l=0}^{n} \left[\sum_{k_{l}=0}^{1}e^{2 \pi ijk_{l}2^{-l}}e^{i \theta(\alpha_{x_{l}}^{k_{l}}\sigma_{x}+\alpha_{y_{l}}^{k_{l}}\sigma_{y}+\alpha_{z_{l}}^{k_{1}}\sigma_{z})}\right] \vert k_{l} \rangle \nonumber \\
&=&\frac{1}{2^{n/2}} \bigotimes_{l=0}^{n} \left[e^{i \theta (\alpha_{x_{l}}^{0}\sigma_{x}+\alpha_{y_{l}}^{0}\sigma_{y}+\alpha_{z_{l}}^{0}\sigma_{z})} \vert 0 \rangle + e^{2 \pi ij2^{-l}}e^{i \theta(\alpha_{x_{l}}^{1}\sigma_{x}+\alpha_{y_{l}}^{1}\sigma_{y}+\alpha_{z_{l}}^{1}\sigma_{z})} \vert 1 \rangle \right] \nonumber
\end{eqnarray}

\begin{eqnarray}
&=&\frac{1}{2^{n/2}} \left[(e^{i \theta (\alpha_{x_{l}}^{0}\sigma_{x}+\alpha_{y_{l}}^{0}\sigma_{y}+\alpha_{z_{l}}^{0}\sigma_{z})}) \vert 0 \rangle
+ e^{2 \pi ij 2^{-1}} (e^{i \theta (\alpha_{x_{l}}^{1}\sigma_{x}+\alpha_{y_{l}}^{1}\sigma_{y}+\alpha_{z_{l}}^{1}\sigma_{z})}) \vert 1 \rangle \right] \nonumber \\
&\otimes& \cdots \left[(e^{i \theta (\alpha_{x_{n}}^{0}\sigma_{x}+\alpha_{y_{n}}^{0}\sigma_{y}+\alpha_{z_{n}}^{0}\sigma_{z})}) \vert 0 \rangle
+ e^{2 \pi ij 2^{-1}} (e^{i \theta (\alpha_{x_{n}}^{1}\sigma_{x}+\alpha_{y_{n}}^{1}\sigma_{y}+\alpha_{z_{n}}^{1}\sigma_{z})}) \vert 1 \rangle \right] \nonumber
\end{eqnarray}

\begin{eqnarray}
&=& \frac{1}{2^{n/2}} \left[(e^{i \theta (\alpha_{x_{l}}^{0}\sigma_{x}+\alpha_{y_{l}}^{0}\sigma_{y}+\alpha_{z_{l}}^{0}\sigma_{z})}) \vert 0 \rangle
+ e^{2 \pi i 0\cdot j_{n}} (e^{i \theta (\alpha_{x_{l}}^{1}\sigma_{x}+\alpha_{y_{l}}^{1}\sigma_{y}+\alpha_{z_{l}}^{1}\sigma_{z})}) \vert 1 \rangle \right] \nonumber \\
&\otimes&  \left[(e^{i \theta (\alpha_{x_{2}}^{0}\sigma_{x}+\alpha_{y_{2}}^{0}\sigma_{y}+\alpha_{z_{2}}^{0}\sigma_{z})}) \vert 0 \rangle
+ e^{2 \pi ij 0 \cdot j_{n-1}j_{n}} (e^{i \theta (\alpha_{x_{2}}^{1}\sigma_{x}+\alpha_{y_{2}}^{1}\sigma_{y}+\alpha_{z_{2}}^{1}\sigma_{z})}) \vert 1 \rangle \right] \nonumber \\
&\cdots& \otimes \left[(e^{i \theta (\alpha_{x_{n}}^{0}\sigma_{x}+\alpha_{y_{n}}^{0}\sigma_{y}+\alpha_{z_{n}}^{0}\sigma_{z})}) \vert 0 \rangle
+ e^{2 \pi i 0 \cdot j_{1}j_{2} \cdots j_{n}} (e^{i \theta (\alpha_{x_{n}}^{1}\sigma_{x}+\alpha_{y_{n}}^{1}\sigma_{y}+\alpha_{z_{n}}^{1}\sigma_{z})}) \vert 1 \rangle \right]. \nonumber
\end{eqnarray}

Now we will prove the unitarity of the transform. For that, consider the following Lemma

\begin{lemma}
Let $R_{\widehat{n}(\theta)}$ be a unitary rotation operator and $\vert k \rangle$ a computational basis states for a qubit. Then $\sum_{k=0}^{1}R_{\widehat{n}(\theta)}\vert k \rangle \langle k \vert R_{\widehat{n}(\theta)}^{\dag}=I$.
\end{lemma}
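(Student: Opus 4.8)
The plan is to recognize the sum $\sum_{k=0}^{1}R_{\widehat{n}}(\theta)\,|k\rangle\langle k|\,R_{\widehat{n}}(\theta)^{\dagger}$ as a conjugation of a resolution of the identity, and then exploit unitarity of $R_{\widehat{n}}(\theta)$. First I would factor the operator out of the sum on both sides: since $R_{\widehat{n}}(\theta)$ does not depend on the summation index $k$, we can write
\begin{eqnarray}
\sum_{k=0}^{1}R_{\widehat{n}}(\theta)\,|k\rangle\langle k|\,R_{\widehat{n}}(\theta)^{\dagger}
= R_{\widehat{n}}(\theta)\left(\sum_{k=0}^{1}|k\rangle\langle k|\right)R_{\widehat{n}}(\theta)^{\dagger}. \nonumber
\end{eqnarray}
Next I would invoke completeness of the single-qubit computational basis, $\sum_{k=0}^{1}|k\rangle\langle k| = |0\rangle\langle 0| + |1\rangle\langle 1| = I$, which is the defining property of an orthonormal basis of $\mathbb{C}^{2}$. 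Substituting this in reduces the expression to $R_{\widehat{n}}(\theta)\,I\,R_{\widehat{n}}(\theta)^{\dagger} = R_{\widehat{n}}(\theta)\,R_{\widehat{n}}(\theta)^{\dagger}$.

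The final step is to use that $R_{\widehat{n}}(\theta)$ is, by hypothesis, a unitary operator, so $R_{\widehat{n}}(\theta)\,R_{\widehat{n}}(\theta)^{\dagger} = I$. This completes the argument. If one prefers to be fully explicit about the unitarity, one can recall that $R_{\widehat{n}}(\theta) = \exp(-i\theta\,\widehat{n}\cdot\vec{\sigma}/2) = \cos(\theta/2)\,I - i\sin(\theta/2)\,(\widehat{n}\cdot\vec{\sigma})$ with $\widehat{n}$ a unit vector, so that $(\widehat{n}\cdot\vec{\sigma})^{2} = I$ and a direct multiplication gives $R_{\widehat{n}}(\theta)^{\dagger}R_{\widehat{n}}(\theta) = \cos^{2}(\theta/2)\,I + \sin^{2}(\theta/2)\,I = I$; but this is not needed if we simply take "unitary rotation operator" at face value.

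Honestly, there is no real obstacle here: the statement is essentially the observation that conjugating the identity by a unitary returns the identity, dressed up in bra-ket notation. The only thing to be careful about is the order of operations — keeping the $R$ on the left and $R^{\dagger}$ on the right throughout, and not accidentally writing $R^{\dagger}R$ where $RR^{\dagger}$ is meant (they coincide for a unitary, but the bookkeeping should be consistent). The reason this Lemma is worth isolating is its intended use: in the factored form of the generalized quantum Fourier transform each tensor factor is a sum over $k_{l}\in\{0,1\}$ of terms $e^{2\pi i j k_{l} 2^{-l}}\,R(\theta)\,|k_{l}\rangle$, and Lemma 3 (applied to the associated density-operator / Kraus-type sum) is what guarantees that this map preserves inner products on each qubit, hence that the full transform is unitary.
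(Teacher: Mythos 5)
Your proof is correct, but it follows a genuinely different and considerably more economical route than the paper's. You factor $R_{\widehat{n}}(\theta)$ out of the sum, invoke the completeness relation $\sum_{k=0}^{1}\vert k\rangle\langle k\vert = I$ for the computational basis of $\mathbb{C}^{2}$, and conclude with $R_{\widehat{n}}(\theta)R_{\widehat{n}}(\theta)^{\dagger}=I$; this works for any unitary operator and any orthonormal basis in any dimension, so the qubit structure and the specific rotation form play no role. The paper instead proves the lemma by brute force: it writes $R_{\widehat{n}}(\theta)=\cos(\theta/2)I - i\sin(\theta/2)(n_{x}\sigma_{x}+n_{y}\sigma_{y}+n_{z}\sigma_{z})$, expands the sum explicitly into its $\vert 0\rangle\langle 0\vert$, $\vert 1\rangle\langle 1\vert$, $\vert 0\rangle\langle 1\vert$, $\vert 1\rangle\langle 0\vert$ components, and checks that the off-diagonal coefficients cancel and the diagonal ones sum to $1$ using $n_{x}^{2}+n_{y}^{2}+n_{z}^{2}=1$. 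What the paper's computation buys is an explicit verification, at the level of Pauli matrices, that the specific operator used later in the generalized Fourier transform really is unitary (in effect re-deriving $RR^{\dagger}=I$ by hand), at the cost of a long and error-prone calculation; your argument buys brevity, transparency, and generality, taking unitarity as the hypothesis it is stated to be, and your closing remark correctly identifies how the lemma feeds into the unitarity proof of $F_{N}$ factor by factor.
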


\begin{proof}
The unitary rotation operator can be expressed as \cite{Nielsen}
\begin{eqnarray}
R_{\widehat{n}(\theta)}=exp(-i \theta \widehat{n} \cdot \frac{\overrightarrow{\sigma}}{2})=\cos(\frac{\theta}{2})I-i\sin(\frac{\theta}{2})(n_{x}\sigma_{x}+n_{y}\sigma_{y}+n_{z}\sigma_{z}).
\end{eqnarray}
Therefore
\begin{eqnarray}
\sum_{k=0}^{1}R_{\widehat{n}(\theta)}\vert k \rangle \langle k \vert R_{\widehat{n}(\theta)}^{\dag}&=&\sum_{k=0}^{1}\cos(\frac{\theta}{2})I-i\sin(\frac{\theta}{2})(n_{x}(\vert 0 \rangle \langle 1 \vert + \vert 1 \rangle \langle 0 \vert)+n_{y}(-i\vert 0 \rangle \langle 1 \vert +i\vert 1 \rangle \langle 0 \vert) \nonumber \\
&=&+n(z) (\vert 0 \rangle 0 \vert - \vert 1 \rangle 1 \vert)) \nonumber \\
&=& (\cos^{2} (\theta /2)+n_{z}^{2} \sin^{2} (\theta /2)+[-n_{y}\sin(\theta /2)-i n_{x}\sin(\theta /2)] \nonumber \\
&\times&[-n_{y}\sin(\theta /2)+i n_{x}\sin(\theta /2)])\vert 0 \rangle \langle 0 \vert \nonumber \\
&+& (n_{y}^{2} \sin^{2}(\theta/2)+n_{x}^{2} \sin^{2}(\theta/2) \nonumber \\
&+& [\cos(\theta/2)+i \sin (\theta /2)n_{z})] [\cos(\theta/2)-i \sin (\theta /2)n_{z}])\vert 1 \rangle \langle 1 \vert \nonumber \\
&+& [n_{y} \cos(\theta/2)\sin(\theta/2)+in_{x}\cos(\theta/2)\sin(\theta/2)-in_{z}n_{y}\sin^{2}(\theta/2) \nonumber \\
&+&n_{x}n_{z}\sin^{2}(\theta/2)
+[-n_{y}\sin (\theta /2)-in_{x}\sin (\theta /2)] \nonumber \\
&+& [\cos(\theta/2)-i  \sin (\theta /2)]] \vert 0 \rangle \langle 1 \vert \nonumber \\
&+& [n_{y} \cos(\theta/2)\sin(\theta/2)+in_{y}^{2}\sin^{2}(\theta/2)-in_{x}\sin^{2}(\theta/2)\cos(\theta/2)\nonumber \\
&+&n_{x}n_{y}\sin^{2}(\theta/2)
+[-n_{y}\sin (\theta /2)+in_{x}\sin (\theta /2)] \nonumber \\
&+& [\cos(\theta/2)+in_{z} \sin (\theta /2)]] \vert 1 \rangle \langle 0 \vert \nonumber \\
&=&  \vert 0 \rangle \langle 0 \vert + \vert 1 \rangle \langle 1 \vert =I.
\end{eqnarray}
\end{proof}

\begin{theorem}
The generalized quantum Fourier transform $F_{N}=\frac{1}{\sqrt{N}} \sum_{j,k=0}^{N-1} e^{\frac{2 \pi i j k}{N}} e^{i \theta \Gamma_{k}} \vert k \rangle \langle j \vert$ is unitary.
\end{theorem}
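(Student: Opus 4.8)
The plan is to show that $F_N$ maps the computational basis to an orthonormal set, which for a square matrix is equivalent to unitarity. I would work directly from the factored (tensor-product) form of the transform derived above, since that form already exhibits $F_N$ as a tensor product of single-qubit operations, and unitarity of a tensor product reduces to unitarity of each factor.

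\textbf{Step 1: Reduce to the single-qubit building blocks.} From the factorization established before the statement, $F_N\vert j\rangle$ is a tensor product over $l=1,\dots,n$ of single-qubit vectors of the form $e^{i\theta(\alpha_{x_l}^{0}\sigma_x+\alpha_{y_l}^{0}\sigma_y+\alpha_{z_l}^{0}\sigma_z)}\vert 0\rangle + e^{2\pi i j 2^{-l}} e^{i\theta(\alpha_{x_l}^{1}\sigma_x+\alpha_{y_l}^{1}\sigma_y+\alpha_{z_l}^{1}\sigma_z)}\vert 1\rangle$, up to the global normalization $2^{-n/2}$. I would argue that it therefore suffices to prove that each single-qubit factor $G_l := \sum_{k_l=0}^{1} e^{2\pi i j k_l 2^{-l}} e^{i\theta(\alpha_{x_l}^{k_l}\sigma_x+\alpha_{y_l}^{k_l}\sigma_y+\alpha_{z_l}^{k_l}\sigma_z)}\vert k_l\rangle\langle k_l\vert$ (times $1/\sqrt2$) is unitary, because a tensor product of unitaries is unitary.

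\textbf{Step 2: Apply Lemma 4.} Each operator $e^{i\theta(\alpha_{x_l}^{k_l}\sigma_x+\alpha_{y_l}^{k_l}\sigma_y+\alpha_{z_l}^{k_l}\sigma_z)}$ is, since $(\alpha_{x_l},\alpha_{y_l},\alpha_{z_l})$ is a unit vector, a rotation operator $R_{\widehat n}(-2\theta)$ up to convention, hence unitary. The extra phase $e^{2\pi i j k_l 2^{-l}}$ is a unitary diagonal factor and can be absorbed. Then $G_l G_l^\dagger = \tfrac12\sum_{k_l=0}^{1} R_{\widehat n}(\theta)\vert k_l\rangle\langle k_l\vert R_{\widehat n}(\theta)^\dagger$, and Lemma 4 gives exactly that this sum equals $I$ (the phases cancel in each diagonal term). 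With the $1/\sqrt2$ per factor and $n$ factors one recovers the global $1/\sqrt N$, $N=2^n$. A symmetric computation gives $G_l^\dagger G_l = I$.

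\textbf{The main obstacle}, and the point I would be most careful about, is that the phase factors $e^{2\pi i j k_l 2^{-l}}$ differ between the $k_l=0$ and $k_l=1$ terms, so when forming $G_l G_l^\dagger$ one must check that the cross-terms are handled correctly: in $\sum_{k_l} G_l\vert k_l\rangle\langle k_l\vert G_l^\dagger$ the projectors $\vert k_l\rangle\langle k_l\vert$ kill any interference between $k_l=0$ and $k_l=1$, so each phase multiplies its own term by a modulus-one scalar and disappears — this is why Lemma 4 applies verbatim with a fixed $\widehat n$ per value of $k_l$ absorbed into the statement. A secondary subtlety: Lemma 4 is stated for a single rotation axis $\widehat n$, whereas here the axis $(\alpha_{x_l}^{k_l},\alpha_{y_l}^{k_l},\alpha_{z_l}^{k_l})$ may depend on $k_l$; I would note that the lemma's proof only uses unitarity of each $R_{\widehat n(\theta)}$ term separately together with $\sum_k\vert k\rangle\langle k\vert = I$, so it extends immediately to a $k$-dependent axis. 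Finally I would remark that unitarity on the computational basis extends by linearity to all of $\mathbb{C}^N$, completing the proof. $\blacksquare$
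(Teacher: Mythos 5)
The central step of your argument fails, and it fails at exactly the point you flagged as a ``secondary subtlety'' and then waved through. Lemma 4 with a single axis is nothing more than $R\left(\sum_k \vert k\rangle\langle k\vert\right)R^{\dagger}=RR^{\dagger}=I$; its validity rests on the \emph{same} unitary conjugating both projectors, not merely on ``unitarity of each term plus completeness.'' With a $k_l$-dependent axis, $\sum_{k_l=0}^{1} e^{i\theta A^{k_l}}\vert k_l\rangle\langle k_l\vert e^{-i\theta A^{k_l}}$ (where $A^{k}=\alpha_x^{k}\sigma_x+\alpha_y^{k}\sigma_y+\alpha_z^{k}\sigma_z$) is the sum of the projectors onto $e^{i\theta A^{0}}\vert 0\rangle$ and $e^{i\theta A^{1}}\vert 1\rangle$, and this equals $I$ only if those two vectors are orthogonal, which is not automatic: take $\theta=\pi/2$ with axis $\hat z$ for $k_l=0$ and $\hat x$ for $k_l=1$; then $e^{i(\pi/2)\sigma_z}\vert 0\rangle=i\vert 0\rangle$ and $e^{i(\pi/2)\sigma_x}\vert 1\rangle=i\vert 0\rangle$, so the sum is $2\vert 0\rangle\langle 0\vert\neq I$. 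Hence your claim $G_lG_l^{\dagger}=I$ does not hold in the generality you assert, and the proof collapses there. (The paper's own proof invokes Lemma 4 at the same spot with $k_l$-dependent exponentials, i.e.\ it tacitly needs the axis to be independent of $k_l$, or the conjugated basis vectors to remain orthonormal; so the subtlety you noticed is genuine, but asserting a false generalization of the lemma is not a repair of it.)

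There is also a structural gap relative to the paper's route. The paper computes $F_NF_N^{\dagger}$ directly: the sum over $j$ yields $\sum_j e^{2\pi i(k-k')j/N}=N\delta_{kk'}$, which eliminates all cross terms in $k$ \emph{before} any single-qubit analysis, leaving $\sum_k e^{i\theta\Gamma_k}\vert k\rangle\langle k\vert e^{-i\theta\Gamma_k}$, which is then factorized over qubits. Your ``orthonormal columns'' route never obtains this cancellation: even if each $G_l$ were unitary, that would only give each column $F_N\vert j\rangle$ unit norm, and since $G_l$ depends on the input $j$, $F_N$ is \emph{not} a tensor product of fixed unitaries, so ``a tensor product of unitaries is unitary'' does not apply to $F_N$ itself. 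The pairwise orthogonality $\langle F_Nj'\vert F_Nj\rangle=0$ for $j\neq j'$ --- which involves cross terms of the form $\langle 0\vert e^{-i\theta A^{0}}e^{i\theta A^{1}}\vert 1\rangle$ that do not vanish when the axes differ --- is never addressed in your proposal. To complete the argument you would need either to compute $F_NF_N^{\dagger}$ (or $F_N^{\dagger}F_N$) as the paper does, or to carry out the column-orthogonality computation explicitly, and in either case to make precise the hypothesis under which the per-qubit sum really equals $I$.
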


\begin{proof}
We have that $F_{N}^{\dag}$ is given by
\begin{eqnarray}
F_{N}^{\dag}=\frac{1}{\sqrt{N}} \sum_{j,k=0}^{N-1} e^{-\frac{2 \pi i j' k'}{N}}  \vert j' \rangle \langle k' \vert e^{-i \theta \Gamma_{k'}}
\end{eqnarray}
Consequently

\begin{eqnarray}
F_{N}F_{N}^{\dag}&=&\frac{1}{N} \sum_{j,k=0}^{N-1} \sum_{j',k'=0}^{N-1} e^{2 \pi i \frac{(jk-j'k')}{N}} e^{i \theta \Gamma_{k}} \vert k \rangle \langle j \vert \vert j' \rangle \langle k' \vert e^{-i \theta \Gamma_{k'}} \nonumber \\
&=& \frac{1}{N} \sum_{j,k=0}^{N-1} \sum_{j',k'=0}^{N-1} e^{2 \pi i \frac{(jk-j'k')}{N}} \langle j \vert j' \rangle e^{i \theta \Gamma_{k}} \vert k \rangle \langle k' \vert e^{-i \theta_{k'} \Gamma_{k'}} \nonumber \\
&=&  \frac{1}{N} \sum_{k,k'}N \delta_{kk'} e^{i \theta} \Gamma_{k} \vert k \rangle \langle k' \vert e^{-i \theta_{k'} \Gamma_{k'}} \nonumber \\
&=& \sum_{k} e^{i \theta \Gamma_{k}} \vert k \rangle \langle k \vert e^{-i \theta \Gamma_{k}},
\end{eqnarray}
where we use that $\sum_{j} e^{\frac{2 \pi i(k-k')j}{N}}= N \delta_{k,k'}$. Therefore
\begin{eqnarray}
&=&\sum_{k_{1},...,k_{n}}e^{i \theta(\alpha_{x_{l}}^{k_{1}}\sigma_{x}+\alpha_{y_{l}}^{k_{1}}\sigma_{y}+\alpha_{z_{l}}^{k_{1}}\sigma_{z}) \otimes \cdots I+ \cdots +I \otimes...\otimes (\alpha_{x_{n}}^{k_{n}}\sigma_{x}+\alpha_{y_{n}}^{k_{n}}\sigma_{y}+\alpha_{z_{n}}^{k_{n}}\sigma_{z})} \nonumber \\
&\times& \vert k_{1} \cdots k_{n} \rangle \langle  k_{1} \cdots k_{n} \vert \nonumber \\
&\times& \sum_{k_{1},...,k_{n}}e^{i \theta(\alpha_{x_{l}}^{k_{1}}\sigma_{x}+\alpha_{y_{l}}^{k_{1}}\sigma_{y}+\alpha_{z_{l}}^{k_{1}}\sigma_{z}) \otimes \cdots I+ \cdots +I \otimes...\otimes (\alpha_{x_{n}}^{k_{n}}\sigma_{x}+\alpha_{y_{n}}^{k_{n}}\sigma_{y}+\alpha_{z_{n}}^{k_{n}}\sigma_{z})} \nonumber
\end{eqnarray}

\begin{eqnarray}
&=&\sum_{k_{1}} e^{i \theta(\alpha_{x_{l}}^{k_{1}}\sigma_{x}+\alpha_{y_{l}}^{k_{1}}\sigma_{y}+\alpha_{z_{l}}^{k_{1}}\sigma_{z})}\vert k_{1} \rangle \langle k_{1} \vert e^{-i \theta(\alpha_{x_{l}}^{k_{1}}\sigma_{x}+\alpha_{y_{l}}^{k_{1}}\sigma_{y}+\alpha_{z_{l}}^{k_{1}}\sigma_{z})} \nonumber \\
&\otimes& \cdots \otimes \nonumber \\
&\otimes& \sum_{k_{n}} e^{i \theta(\alpha_{x_{n}}^{k_{n}}\sigma_{x}+\alpha_{y_{n}}^{k_{n}}\sigma_{y}+\alpha_{z_{n}}^{k_{n}}\sigma_{z})}\vert k_{n} \rangle \langle k_{n} \vert e^{-i \theta(\alpha_{x_{n}}^{k_{n}}\sigma_{x}+\alpha_{y_{n}}^{k_{n}}\sigma_{y}+\alpha_{z_{n}}^{k_{n}}\sigma_{z})}. \nonumber
\end{eqnarray}
Using the previous Lemma
\begin{eqnarray}
F_{N}F_{N}^{\dag}&=&(\vert 0 \rangle \langle 0 \vert + \vert 1 \rangle \langle 1 \vert) \otimes \cdots \otimes (\vert 0 \rangle \langle 0 \vert + \vert 1 \rangle \langle 1 \vert) \nonumber \\
&=& I^{\otimes n}.
\end{eqnarray}
It is easy to verify that $F_{N}^{\dag}F_{N}=I^{\otimes n}$ so that the proof is finished.
\end{proof}

We can see that generalized quantum Fourier transform is implemented through group
\begin{eqnarray}
\bigotimes_{i=1}^{n}  spin_{+}(3,0) =  spin_{+}(3,0) \otimes \cdots \otimes spin_{+}(3,0).
\end{eqnarray}
or equivalently
\begin{eqnarray}
\bigoplus_{i=1}^{n} Lie [spin_{+}(3,0)]=Lie [spin_{+}(3,0)]\oplus \cdots Lie [spin_{+}(3,0)].
\end{eqnarray}
In fact, consider the following theorem
\begin{theorem}
Let $\rho$ and $\rho^{'}$ be two representations of $\left([\bigwedge^{2} \mathbb{R}^{p,q}]^{\oplus^{n}}, [\ , \ ]\right)$ and $U \left( Lie [spin_{+}(p,q)]\right)$, respectively, where $U( \  )$ is the universal enveloping algebra defined by
\begin{eqnarray}
U \left( Lie [spin_{+}(p,q)]\right)=\frac{T[U \left( Lie [spin_{+}(p,q)]\right)]}{I_{d}},
\end{eqnarray}
where $T$ is the tensor algebra an $I_{d}$ is the two-sided ideal generated by elements of the form $x \otimes y - y \otimes x -[x,y]$. Then
\end{theorem}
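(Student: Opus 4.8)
The plan is to reduce the statement to two standard ingredients: the identification of the bivector space of a Clifford algebra, under the commutator bracket, with the Lie algebra of the spin group, and the universal property of the universal enveloping algebra $U$. The displayed remarks preceding the theorem already indicate that the generalized quantum Fourier transform is implemented through $\bigotimes_{i=1}^{n} spin_{+}(3,0)$, equivalently through $\bigoplus_{i=1}^{n} Lie[spin_{+}(3,0)]$, so what must be shown is that a representation $\rho$ of the $n$-fold direct sum of bivector Lie algebras and a representation $\rho'$ of $U(Lie[spin_{+}(p,q)])$ are in canonical correspondence, each being determined by the other.

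First I would establish that $\left(\bigwedge^{2}\mathbb{R}^{p,q},[\ ,\ ]\right)$ is a Lie algebra isomorphic to $Lie[spin_{+}(p,q)]$. Working inside $Cl(p,q)$ with an orthonormal basis $e_{1},\dots,e_{p+q}$, the products $e_{i}e_{j}$ with $i<j$ span $\bigwedge^{2}\mathbb{R}^{p,q}$, and a direct computation of $[e_{i}e_{j},e_{k}e_{l}]$ using the Clifford relation $e_{a}e_{b}=-e_{b}e_{a}+2\langle e_{a},e_{b}\rangle$ shows that the commutator closes on bivectors and reproduces exactly the structure constants of $\mathfrak{so}(p,q)$; since $spin_{+}(p,q)$ is the connected double cover of $SO_{+}(p,q)$ it has the same Lie algebra, which yields the isomorphism. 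Taking the $n$-fold direct sum gives $\left([\bigwedge^{2}\mathbb{R}^{p,q}]^{\oplus n},[\ ,\ ]\right)\cong\bigoplus_{i=1}^{n}Lie[spin_{+}(p,q)]$ as Lie algebras, and, via Poincaré--Birkhoff--Witt, $U\!\left(\bigoplus_{i=1}^{n}\mathfrak{g}\right)\cong U(\mathfrak{g})^{\otimes n}$.

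Next I would invoke the universal property of $U$ in the form built into the stated definition: the canonical map $\iota:\mathfrak{g}\hookrightarrow U(\mathfrak{g})$ is a Lie-algebra homomorphism when $U(\mathfrak{g})$ is viewed as a Lie algebra under its commutator, and every Lie-algebra homomorphism $\mathfrak{g}\to A$ into an associative algebra $A$ factors uniquely through $\iota$. Applying this with $A=\mathrm{End}(V)$, a representation $\rho$ of $[\bigwedge^{2}\mathbb{R}^{p,q}]^{\oplus n}$ extends uniquely to an associative-algebra representation $\widehat{\rho}$ of the enveloping algebra, while any representation $\rho'$ of $U(Lie[spin_{+}(p,q)])$ restricts along $\iota$ (and the isomorphism of the previous step) to a Lie-algebra representation; these two operations are mutually inverse. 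This delivers the claimed correspondence between $\rho$ and $\rho'$ and, in particular, confirms that the algebraic object controlling the generalized quantum Fourier transform is $U(Lie[spin_{+}(p,q)])$ (tensored $n$ times).

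The main obstacle is the explicit bracket computation establishing $\bigwedge^{2}\mathbb{R}^{p,q}\cong\mathfrak{so}(p,q)$ with the correct signs, together with the care needed in passing between $spin_{+}(p,q)$ and $SO_{+}(p,q)$ and in bookkeeping the $n$ tensor factors; once that isomorphism is secured, the remainder is a formal application of the universal property of $U$ and of Poincaré--Birkhoff--Witt, both standard.
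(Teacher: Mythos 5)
Your proposal is correct, and it reaches the paper's conclusion by a more standard route than the one the authors take. Where you identify $\bigl(\bigwedge^{2}\mathbb{R}^{p,q},[\ ,\ ]\bigr)$ with $Lie[spin_{+}(p,q)]$ by computing $[e_{i}e_{j},e_{k}e_{l}]$ in an orthonormal basis and matching the structure constants of $\mathfrak{so}(p,q)$, the paper argues basis-free: it uses the reversion operator and the grade decomposition $B_{i}B_{i}'\in\mathbb{R}\oplus\bigwedge^{2}\oplus\bigwedge^{4}$ to show $[B,B']=2\langle BB'\rangle_{2}$, so the bracket closes on bivectors and $[\bigwedge^{2}\mathbb{R}^{p,q}]^{\oplus n}=\bigoplus_{i=1}^{n}Lie[spin_{+}(p,q)]$. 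For the second half, the paper writes down the explicit map $\psi(\rho(B_{1},\dots,B_{n}))=\rho'(B_{1})\otimes\cdots\otimes I+\cdots+I\otimes\cdots\otimes\rho'(B_{n})$ and verifies by hand, inside the enveloping algebra, that it preserves commutators, asserting bijectivity at the end; you instead invoke the universal property of $U$ together with $U(\mathfrak{g}^{\oplus n})\cong U(\mathfrak{g})^{\otimes n}$, which packages that verification into standard theory and makes the correspondence between $\rho$ and $\rho'$ (and its invertibility) transparent rather than asserted. What your route leaves implicit, and would be worth one line, is that under the isomorphism $U(\mathfrak{g}^{\oplus n})\cong U(\mathfrak{g})^{\otimes n}$ the canonical embedding of the Lie algebra sends $(B_{1},\dots,B_{n})$ precisely to $B_{1}\otimes 1\otimes\cdots\otimes 1+\cdots+1\otimes\cdots\otimes 1\otimes B_{n}$, which is exactly the operator form appearing in the theorem's displayed conclusion; conversely, your treatment is cleaner than the paper's on the point the paper glosses over, namely in what sense the correspondence is bijective, since for arbitrary unrelated $\rho$ and $\rho'$ the claimed ``$\simeq$'' only makes sense through the extension/restriction correspondence you spell out.
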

\begin{eqnarray}
\rho'(Lie [spin_{+}(p,q))\otimes \cdots I+ \cdots + I \otimes \cdots \rho'(Lie [spin_{+}(p,q)) \simeq \rho \left([\bigwedge^{2} \mathbb{R}^{p,q}]^{\oplus^{n}}, [\ , \ ]\right).
\end{eqnarray}
\begin{proof}
Initially, we prove that $\left([\bigwedge^{2} \mathbb{R}^{p,q}]^{\oplus^{n}}, [\ , \ ]\right)$ is the direct sum  $\bigoplus_{i=1}^{n} Lie [spin_{+}(p,q)]$. So we first consider $[B,B']=([B_{1},B'_{1}],...[B_{n},B'_{n}]) \in [\bigwedge^{2} \mathbb{R}^{p,q}]^{\oplus^{n}}$ with $[B_{i},B_{j}]\equiv B_{i}B_{j}-B_{j}B_{i}$. Then
\begin{eqnarray}
BB'&=&(\langle B_{1}B'_{1} \rangle_{0}+ \langle B'_{1}B_{1} \rangle_{2}+\langle B_{1}B'_{1} \rangle_{4}, \cdots, \langle B_{n}B'_{n} \rangle_{0}+ \langle B'_{n}B_{n} \rangle_{2}+\langle B_{n}B'_{n} \rangle_{4}) \nonumber \\
&\in& \left[\mathbb{R} \oplus  \bigwedge^{2} \mathbb{R}^{p,q} \oplus \bigwedge^{4} \mathbb{R}^{p,q}\right]^{\oplus n}.
\end{eqnarray}
We have that
\begin{eqnarray}
(BB')^{\sim}&=&((B_{1}B'_{1})^{\sim}, \cdots (B_{n}B'_{n})^{\sim}) \nonumber \\
&=&(\langle B_{1}B'_{1} \rangle_{0}- \langle B'_{1}B_{1} \rangle_{2}+\langle B_{1}B'_{1} \rangle_{4}, \cdots, \langle B_{n}B'_{n} \rangle_{0}- \langle B'_{n}B_{n} \rangle_{2}+\langle B_{n}B'_{n} \rangle_{4}). \nonumber
\end{eqnarray}
On the other hand
\begin{eqnarray}
(BB')^{\sim}&=&((B_{1}B'_{1})^{\sim}, \cdots,(B_{1}B'_{1})^{\sim}) \nonumber \\
&=&(\widetilde{B'}_{1}\widetilde{B}_{1}, \cdots, \widetilde{B'}_{n}\widetilde{B}_{n}) \nonumber \\
&=&(B'_{1}B_{1},...,B'_{n}B_{n}).
\end{eqnarray}
Consequently,
\begin{eqnarray}
[B,B']=(2 \langle B_{1}B'_{1} \rangle_{2}, \cdots, 2 \langle B_{n}B'_{n} \rangle_{2}),
\end{eqnarray}
i.e., we show that
\begin{eqnarray}
\left([\bigwedge^{2} \mathbb{R}^{p,q}], [\ , \ ]\right)\oplus \cdots \oplus \left([\bigwedge^{2} \mathbb{R}^{p,q}], [\ , \ ]\right)= \bigoplus_{i=1}^{n} Lie [spin_{+}(p,q)].
\end{eqnarray}
Now we define the map $\psi$ given by:
\begin{eqnarray}
\psi(\rho (B_{1},...B_{n}))=\rho'(B_{1}) \otimes \cdots \otimes I+ \cdots I \otimes  \cdots \otimes \rho'(B_{n}).
\end{eqnarray}
Thus
\begin{eqnarray}
\psi([\rho(B_{1},..., B_{n}),\rho(C_{1},..., C_{n})]&=&\psi(\rho([B_{1},C_{1}],...[B_{n},C_{n}]) \nonumber \\
&=&\rho'([B_{1},C_{1}]) \otimes \cdots I \otimes \cdots \rho'([B_{n},C_{n}]) \nonumber \\
&=& \rho'(B{_1}C_{1} -C_{1}B_{1}) \cdots \otimes I+ \cdots I \otimes \cdots \rho'(B_{n}C_{n}-C_{n}B_{n}). \nonumber
\end{eqnarray}
since we are considered universal enveloping algebra. Therefore
\begin{eqnarray}
\psi([\rho(B_{1},..., B_{n}),\rho(C_{1},..., C_{n})]&=& \rho'(B_{1})\rho'(C_{1}) \cdots \otimes I- \rho'(C_{1})\rho'(B_{1}) \cdots \otimes I \nonumber \\
&+& \cdots \nonumber \\
&+& I \otimes \cdots \rho'(B_{n})\rho'(C_{n})- I \otimes \cdots \rho'(C_{n})\rho'(B_{n}) \nonumber \\
&=& [\rho'(B_{1}) \otimes \cdots \otimes I+ \cdots + I \otimes \cdots \otimes \rho'(B_{n})] \nonumber \\
&\times& [\rho'(C_{1}) \otimes \cdots \otimes I+ \cdots + I \otimes \cdots \otimes \rho'(C_{n})] \nonumber \\
&-& [\rho'(C_{1}) \otimes \cdots \otimes I+ \cdots + I \otimes \cdots \otimes \rho'(C_{n})] \nonumber \\
& \times&[\rho'(B_{1}) \otimes \cdots \otimes I+ \cdots + I \otimes \cdots \otimes \rho'(B_{n})] \nonumber \\
&=& [\rho'(B_{1}) \otimes \cdots \otimes I+ \cdots \otimes I \cdots \otimes \rho'(B_{n}) \nonumber \\
&,& \rho'(C_{1}) \otimes \cdots \otimes I+ \cdots \otimes I \cdots \otimes \rho'(c_{n})] \nonumber \\
&=&[\psi(\rho(B_{1},...B_{n})),\psi(\rho(B_{1},...B_{n}))]. \nonumber
\end{eqnarray}
Then we have a bijective homomorphism and the proof is finished.
\end{proof}

Thus $(e^{B_{k}})^{\otimes n} \in \bigoplus spin_{+}(p,q)$ with $B \in \bigwedge \mathbb{R}^{p,q}$. This result is general and we can particularize to $\mathbb{R}^{3,0}$. The next theorem gives us the distance between the generalized quantum Fourier transform and the quantum Fourier transform

\begin{theorem}
Let $F_{G}$ be generalized quantum Fourier transform and $F$ be quantum Fourier transform. Then
\begin{eqnarray}
 \| F_{G}-F \|\leq 2^{\frac{3\log N}{2}}\theta \log N\sqrt{2}e^{\theta \log N \sqrt{2}},
 \end{eqnarray}
 where $\parallel A  \parallel=  \sqrt{\sum_{i,j=1}^{n}\vert a_{ij} \vert^{2}}$ \cite{Horn1}.
\end{theorem}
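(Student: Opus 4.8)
The plan is to isolate the single-mode perturbation $e^{i\theta\Gamma_{k}}-I$, bound it, and then pay the price of the $N^{2}$ matrix entries. First I would put the two transforms on a common footing: since $F$ is exactly $F_{G}$ evaluated at $\theta=0$, subtracting term by term gives
\[
F_{G}-F \;=\; \frac{1}{\sqrt N}\sum_{j,k=0}^{N-1} e^{\frac{2\pi i jk}{N}}\,\bigl(e^{i\theta\Gamma_{k}}-I\bigr)\,\vert k\rangle\langle j\vert .
\]

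Next I would apply the triangle inequality for $\|\cdot\|$ to these $N^{2}$ rank-one terms. Because $\vert e^{2\pi ijk/N}\vert=1$ and $\vert k\rangle\langle j\vert$ has unit Frobenius norm, each summand obeys $\|(e^{i\theta\Gamma_{k}}-I)\vert k\rangle\langle j\vert\| = \|(e^{i\theta\Gamma_{k}}-I)\vert k\rangle\|_{2} \le \|e^{i\theta\Gamma_{k}}-I\|$, so that
\[
\|F_{G}-F\| \;\le\; \frac{1}{\sqrt N}\sum_{j,k=0}^{N-1}\|e^{i\theta\Gamma_{k}}-I\| \;\le\; \frac{N^{2}}{\sqrt N}\,\max_{k}\|e^{i\theta\Gamma_{k}}-I\| \;=\; 2^{\frac{3\log N}{2}}\,\max_{k}\|e^{i\theta\Gamma_{k}}-I\| ,
\]
where I used $N=2^{\log N}$.

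It then remains to estimate $\|e^{i\theta\Gamma_{k}}-I\|$. Expanding the exponential and using submultiplicativity of the norm,
\[
\|e^{i\theta\Gamma_{k}}-I\| \;\le\; \sum_{m\ge 1}\frac{(\theta\|\Gamma_{k}\|)^{m}}{m!} \;=\; e^{\theta\|\Gamma_{k}\|}-1 \;\le\; \theta\,\|\Gamma_{k}\|\,e^{\theta\|\Gamma_{k}\|}.
\]
For $\|\Gamma_{k}\|$ I would use that $\Gamma_{k}$ is a sum of $n=\log N$ single-site terms, each of the form $\alpha_{x_{l}}^{k_{l}}\sigma_{x}+\alpha_{y_{l}}^{k_{l}}\sigma_{y}+\alpha_{z_{l}}^{k_{l}}\sigma_{z}$ with $(\alpha_{x_{l}}^{k_{l}},\alpha_{y_{l}}^{k_{l}},\alpha_{z_{l}}^{k_{l}})$ a unit vector, so that each contributes a factor $\sqrt 2$ and the triangle inequality gives $\|\Gamma_{k}\|\le\sqrt 2\,\log N$. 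Feeding this back yields $\max_{k}\|e^{i\theta\Gamma_{k}}-I\| \le \theta\log N\sqrt 2\,e^{\theta\log N\sqrt 2}$, and combining with the previous display gives exactly
\[
\|F_{G}-F\| \;\le\; 2^{\frac{3\log N}{2}}\,\theta\log N\,\sqrt 2\; e^{\theta\log N\sqrt 2},
\]
as claimed.

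The exponential-series estimate and the counting of the $N^{2}$ terms are routine; the step that needs the most care is the norm bookkeeping. One has to move $e^{i\theta\Gamma_{k}}-I$ past the computational-basis vector $\vert k\rangle$ with the operator norm, then sum the resulting rank-one pieces with the Frobenius norm of the theorem statement, and keep the unit-vector normalization of the $\alpha$'s explicit so that the single-site Pauli combinations contribute only the harmless factor $\sqrt 2$ rather than a dimension-dependent one; this is where I expect the only real friction. (Since $\Gamma_{k}$ is Hermitian one could in fact replace $e^{\theta\|\Gamma_{k}\|}-1$ by the sharper $\theta\|\Gamma_{k}\|$ and obtain a strictly stronger bound, but the inequality exactly as stated already follows from the cruder estimate above.)
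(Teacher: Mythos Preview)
Your argument is correct and tracks the paper's proof closely: both split $F_{G}-F$ into its $N^{2}$ rank-one blocks, apply the triangle inequality, bound the exponential perturbation termwise, estimate $\|\Gamma_{k}\|$ by $n\sqrt{2}$, and reassemble into $2^{3n/2}\theta n\sqrt{2}\,e^{\theta n\sqrt{2}}$. The one substantive difference is in how the exponential perturbation is handled. You first pull out the unimodular scalar $e^{2\pi ijk/N}$ and then bound $\bigl\|e^{i\theta\Gamma_{k}}-I\bigr\|$ directly from its power series, obtaining $\theta\|\Gamma_{k}\|\,e^{\theta\|\Gamma_{k}\|}$. The paper instead keeps the scalar in the exponent and invokes the matrix-analysis inequality $\|e^{X+Y}-e^{X}\|\le \|Y\|\,e^{\|X\|}e^{\|Y\|}$ from Horn--Johnson, with $X=\tfrac{2\pi ijk}{N}\,I^{\otimes n}$ and $Y=i\theta\Gamma_{k}$. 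Your route is more elementary and in fact cleaner: it avoids the stray factor $e^{\|X\|}$ that the cited inequality introduces and that the paper then discards without comment. The remaining bookkeeping --- the bound $\|\Gamma_{k}\|\le \sqrt{2}\,\log N$ via the triangle inequality over the $n$ single-site Pauli terms, the sum over $j,k$ contributing $N^{2}/\sqrt{N}=2^{3n/2}$, and the substitution $n=\log N$ --- is identical in both proofs, as is your closing observation that the Hermiticity of $\Gamma_{k}$ would allow a sharper linear bound in place of the exponential one.
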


\begin{proof}
We have that
\begin{eqnarray}
\| F_{G}-F \|&=& \left\|\frac{1}{\sqrt{N}} \sum_{j,k=0}^{N-1}e^{\frac{2 \pi ijk}{N}+ i \theta \Gamma_{k}} \vert k \rangle \langle j \vert -
\frac{1}{\sqrt{N}} \sum_{j,k=0}^{N-1}e^{\frac{2 \pi ijk}{N}} \vert k \rangle \langle j \vert\right\| \nonumber \\
&=& \left\|\frac{1}{\sqrt{N}} \sum_{j,k=0}^{N-1}(e^{\frac{2 \pi ijk}{N}+ i \theta \Gamma_{k}}-e^{\frac{2 \pi ijk}{N}}) \vert k \rangle \langle j \vert \right\| \nonumber \\
&\leq& \frac{1}{\sqrt{N}} \sum_{j,k=0}^{N-1}\left\|(e^{\frac{2 \pi ijk}{N}+ i \theta \Gamma_{k}}-e^{\frac{2 \pi ijk}{N}}) \vert k \rangle \langle j \vert \right\| \nonumber \\
&\leq&  \frac{1}{\sqrt{N}} \sum_{j,k=0}^{N-1} \left\| e^{\frac{2 \pi ijk}{N}+ i \theta \Gamma_{k}}-e^{\frac{2 \pi ijk}{N}} I^{\otimes N} \right\| \| \vert k \rangle \langle j \vert \| \nonumber \\
&=& \frac{1}{\sqrt{N}} \sum_{j,k=0}^{N-1} \left\| e^{\frac{2 \pi ijk}{N}+ i \theta \Gamma_{k}}-e^{\frac{2 \pi ijk}{N}} I^{\otimes N} \right\|. \nonumber \\
\end{eqnarray}
Using $\|e^{X+Y}-e^{X}\| \leq \|Y\|e^{\|X\|}e^{\|Y\|}$ \cite{Horn2},
\begin{eqnarray}
\| F_{G}-F \| \leq \frac{1}{\sqrt{N}} \sum_{j,k=0}^{N-1} \|i \theta \Gamma_{k} \| e^{\| 2 \pi ijk/N I^{\otimes N}\|} e^{\|i \theta \Gamma_{k} \|}.
\end{eqnarray}
Since $\|X \otimes Y \| = \| X \| \| Y \|$,
\begin{eqnarray}
\| F_{G}-F \| &\leq& \frac{1}{\sqrt{N}} \sum_{j=0}^{N-1}\sum_{k_{1}=0}^{1} \cdots \sum_{k_{n}=0}^{1} \|i \theta [(\alpha_{x_{l}}^{k_{1}}\sigma_{x}+\alpha_{y_{l}}^{k_{1}}\sigma_{y}+\alpha_{z_{l}}^{k_{1}}\sigma_{z}) \otimes \cdots \otimes I+ \cdots \nonumber \\
&+& I \otimes...\otimes (\alpha_{x_{n}}^{k_{n}}\sigma_{x}+\alpha_{y_{n}}^{k_{n}}\sigma_{y}+\alpha_{z_{n}}^{k_{n}}\sigma_{z})]\|
\| e^{\| 2 \pi ij\sum_{l=1}^{n}k_{l}2^{-l}/N I^{\otimes N}\|} \nonumber \\
&\times&  e^{ \|i \theta [(\alpha_{x_{l}}^{k_{1}}\sigma_{x}+\alpha_{y_{l}}^{k_{1}}\sigma_{y}+\alpha_{z_{l}}^{k_{1}}\sigma_{z}) \otimes \cdots \otimes I+ \cdots I \otimes...\otimes (\alpha_{x_{n}}^{k_{n}}\sigma_{x}+\alpha_{y_{n}}^{k_{n}}\sigma_{y}+\alpha_{z_{n}}^{k_{n}}\sigma_{z})]\|}   \nonumber \\
&\leq& \frac{1}{\sqrt{N}} \sum_{j=0}^{N-1} 2^{n}\theta n\sqrt{2}e^{\theta n\sqrt{2}} \nonumber \\
&\leq& 2^{\frac{3n}{2}}\theta n\sqrt{2}e^{\theta n\sqrt{2}} \nonumber \\
&=&  2^{\frac{3\log N}{2}}\theta \log N\sqrt{2}e^{\theta \log N \sqrt{2}}.
\end{eqnarray}
\end{proof}

\section{Conclusions}

In this paper we present models of quantum neural networks using Clifford algebras $Cl(2n)$ and tensor product of Clifford algebras $Cl(3,0)^{\otimes n}$. Clifford architectures provided geometric and topological models of data. We show, within a rigorous mathematical framework, that this algebraic structure allows a general unitary learning algorithm (following the reference \cite{Shao}) and arbitrary activation functions implemented by the quantum-classical hibrid scheme. The basic idea is that unitary operators can be constructed from representations of Clifford algebras. The elements of Clifford algebras can be associated with geometric objects. The information related to orientation of subspaces can be encoded in multivectors which makes it possible controlling of subspaces without giving up information about their orientations \cite{Lounesto}. Therefore, our model is able to capture geometric information contained in the data and generate patterns that involve quantum entanglement. Hamiltonian simulation techniques allow an analysis of errors made if possible approximations are considered. An algebraic generalization of the quantum Fourier transform was proposed with additional parameters that enable its use as quantum machine learning models. Important properties such as unitarity, factorization and an upper bound for the distance between the standard quantum Fourier transform were derived. It is important to point out that applications of this generalized transform can go beyond quantum machine learning. An example circuit based on this algebraic formulation for two-qubit logic gates was presented, using two-level unitary quantum gates. This generalized approach make it possible to build specific models based on spinors and groups related to Clifford algebras such as the classical Clifford neurons \cite{Sven}. Another aspect to be highlighted is that since $Spin(n)$ is double cover of $SO(n)$ and $Lie(spin(n))\simeq so(n)$ we may to build representations of orthogonal neural networks using Clifford algebras in a systematic way. This networks may evade explosive or evanescent gradients \cite{Kere}. Finally, we believe that this proposal may be useful in the near-term quantum computing and as perspectives, we intend to implement our models in quantum computers.

\section{Appendix A}
In this appendix, we review some basic concepts about Clifford algebras \cite{Vaz, Lounesto}. \

 Given a vector space $V$, the Clifford algebra can be defined as quotient $Cl(V,Q)=\frac{T(V)}{I}$, where $I_{Q}$ is a two sided ideal generated by elements
 \begin{eqnarray}
 v \otimes v - Q(v)1,
 \end{eqnarray}
 for all $v \in V$; Q is the quadratic form and T(V) is the tensor algebra. Alternatively, let $V$ be a space vector over $\mathbb{R}$ equipped with symmetric bilinear form $g$, $A$ an associative algebra with unit $1_{A}$ and  $\gamma$ a linear application $\gamma:V\rightarrow A$. The pair $(A, \gamma)$ is a Clifford algebra for the quadratic space $(V,g)$ if $A$ is generated as an algebra by $\{\gamma(v); v \in V\}$, $\{a 1_{A}; a \in \mathbb{R}\}$ and satisfy
\begin{eqnarray}
\gamma(v)\gamma(u)+\gamma(u)\gamma(v)=2g(v,u)1_{A}, \label{one}
\end{eqnarray}
for all $v,u \in V$. Let $V$ be a vector space $\mathbb{R}^{n}$ and $g$ a symmetric bilinear form in $\mathbb{R}^{n}$ of signature $(p,q)$ with $p+q=n$.
We will denote by $Cl(p,q) \equiv Cl_{p,q} \equiv Cl(\mathbb{R}^{p,q})$ the Clifford algebra associated to quadratic space $\mathbb{R}^{p,q}$. In addition, we denote $Cl(2n,0)\equiv Cl(2n)$. The even subalgebra is defined by:
\begin{eqnarray}
Cl^{+}_{p,q}=\{\Gamma \in Cl_{p,q} ; \Gamma = \# \Gamma= \widehat{\Gamma} = (-1)^{p}\Gamma_{p}\},
\end{eqnarray}
 where $\#()$ and $\widehat{\ . \ }$ denote graded involution, that keep the sign of the elements belonging to even subspaces. % em que $\Gamma_{[k]}=(-1)\Gamma_{[k]}$
The groups $spin(p,q)$ and $spin_{+}(p,q)$ are given by:
\begin{eqnarray}
spin(p,q)=\{a \in Cl_{p,q}^{+}; N(a)=\pm 1 \}
\end{eqnarray}
and
\begin{eqnarray}
spin_{+}(p,q)=\{a \in Cl_{p,q}^{+}; N(a)= 1 \},  \label{eq4}
\end{eqnarray}
respectively, where $N(a)=\mid a \mid ^{2}=\langle \widetilde{a} a \rangle_{0}$ is related to norm of elements of Clifford algebra, and $\widetilde{\ . \ }$ represent the reversion operator defined by
$\widetilde{A}_{[k]}=(-1)^{k(k-1)/2}A_{[k]};$   $\langle \ \ \rangle_{k}:Cl_{p,q}\rightarrow \bigwedge_{k}(\mathbb{R}_{p,q})$, with $\bigwedge_{k}(\mathbb{R}_{p,q})$ is the exterior algebra of vector space $\mathbb{R}_{p,q}$.  \\

The Lie Algebra of $spin_{+}(p,q)$ is the space of bivectors $\bigwedge^{2} \mathbb{R}^{p,q}$. Let $B_{1}$ and $B_{2}$ be two bivectors. Then the commutator
\begin{eqnarray}
[B_{1},B_{2}] \in \bigwedge^{2} \mathbb{R}^{p,q}
\end{eqnarray}
is a bivector.

\section{Appendix B}

Unitary operators can be obtained through exponentials of anti-Hermitian operators.  We will show how a simple circuit can be built from exponentials of elements of representations of Clifford algebras following the prescription obtained in the reference \cite{Nielsen}. Thus we will build a circuit related to the unitary transformation
\begin{eqnarray}
U(\theta_{1}, \theta_{2})&=&\exp[i(\theta_{1}(\sigma_{x} \otimes \sigma_{y}) + \theta_{2} (\sigma_{y}\otimes \sigma_{x})] \nonumber \\
&=& U( \theta_{1}) U( \theta_{2}) \nonumber
\end{eqnarray}
where
\begin{displaymath}
U(\theta_{1})=
\left(\begin{array}{cccc}
\cos(\theta_{1})& 0 & 0 & \sin(\theta_{1}) \\
0 & \cos(\theta_{1}) & \sin(\theta_{1}) & 0 \\
0 & -\sin(\theta_{1}) & \cos(\theta_{1}) & 0 \\
-\sin(\theta_{1}) & 0 & 0 & \cos(\theta_{1}) \\
\end{array} \right),
\end{displaymath}
and
\begin{displaymath}
U(\theta_{2})=
\left(\begin{array}{cccc}
\cos(\theta_{1})& 0 & 0 & \sin(\theta_{1}) \\
0 & \cos(\theta_{1}) & \sin(\theta_{1}) & 0 \\
0 & -\sin(\theta_{1}) & \cos(\theta) & 0 \\
-\sin(\theta_{1}) & 0 & 0 & \cos(\theta_{1}) \\
\end{array} \right).
\end{displaymath}
$U(\theta_{1})$ can be expressed as a product of two-level unitary gates
\begin{eqnarray}
U(\theta_{1})=U_{1}(\theta_{1})U_{2}(\theta_{1})U_{3}(\theta_{1}),
\end{eqnarray}
where
\begin{displaymath}
U_{1}(\theta_{1})=
\left(\begin{array}{cccc}
\cos(\theta_{1})& 0 & 0 & -\sin(\theta_{1}) \\
0 & 1 & 0 & 0 \\
0 & 0 & 1 & 0 \\
-\sin(\theta_{1}) & 0 & 0 & -\cos(\theta_{1}) \\
\end{array} \right),
\end{displaymath}

\begin{displaymath}
U_{2}(\theta_{1})=
\left(\begin{array}{cccc}
1& 0 & 0 & 0 \\
0 & \cos(\theta_{1}) & \sin(\theta_{1}) & 0 \\
0 & \sin(\theta_{1}) & -\cos(\theta) & 0 \\
0 & 0 & 0 & 1 \\
\end{array} \right)
\end{displaymath}
and
\begin{displaymath}
U_{3}(\theta_{1})=
\left(\begin{array}{cccc}
1& 0 & 0 & 0 \\
0 & 1 & 0 & 0 \\
0 &  0 & -1 & 0 \\
0 & 0 & 0 & -1 \\
\end{array} \right).
\end{displaymath}

Analogously, $U(\theta_{2})$ can be expressed as a product of two-level unitary gates
\begin{eqnarray}
U(\theta_{2})=U_{1}(\theta_{2})U_{2}(\theta_{2})U_{3}(\theta_{2})
\end{eqnarray}
where
\begin{displaymath}
U_{1}(\theta_{2})=
\left(\begin{array}{cccc}
\cos(\theta_{1})& 0 & 0 & -\sin(\theta_{1}) \\
0 & 1 & 0 & 0 \\
0 & 0 & 1 & 0 \\
-\sin(\theta_{1}) & 0 & 0 & -\cos(\theta_{1}) \\
\end{array} \right),
\end{displaymath}

\begin{displaymath}
U_{2}(\theta_{1})=
\left(\begin{array}{cccc}
1& 0 & 0 & 0 \\
0 & \cos(\theta_{1}) & -\sin(\theta_{1}) & 0 \\
0 & -\sin(\theta_{1}) & -\cos(\theta) & 0 \\
0 & 0 & 0 & 1 \\
\end{array} \right)
\end{displaymath}
and
\begin{displaymath}
U_{3}(\theta_{1})=
\left(\begin{array}{cccc}
1& 0 & 0 & 0 \\
0 & 1 & 0 & 0 \\
0 &  0 & -1 & 0 \\
0 & 0 & 0 & -1 \\
\end{array} \right),
\end{displaymath}
so that
\begin{eqnarray}
U(\theta)=U_{1}(\theta_{1})U_{2}(\theta_{1})U_{3}(\theta_{1})U_{1}(\theta_{2})U_{2}(\theta_{2})U_{3}(\theta_{2}).
\end{eqnarray}

The circuit is illustrated in figure below.

\ \

\Qcircuit @C=1em @R=.7em {
& \ctrlo{1} & \gate{\widetilde{U}_{1}(\theta_{1})} &  \gate{\widetilde{U}_{2}(\theta_{1})} & \ctrl{1} & \ctrl{1} & \ctrlo{1} & \gate{\widetilde{U}_{1}(\theta_{2})} &  \gate{\widetilde{U}_{2}(\theta_{2})} & \ctrl{1} & \ctrl{1} & \qw \\
& \targ     & \ctrl{-1}                            & \ctrlo{-1}                            &  \targ  &  \gate{\widetilde{U}_{3}(\theta_{2})} & \targ     & \ctrl{-1}                            & \ctrlo{-1}                            &  \targ  &  \gate{\widetilde{U}_{3}(\theta_{2})}& \qw
}

\ \

where
\begin{displaymath}
\widetilde{U}_{1}(\theta_{1})=
\left(\begin{array}{cc}
\cos(\theta_{1}) & -\sin(\theta_{1}) \\
-\sin(\theta_{1}) & -\cos(\theta_{1})  \\
\end{array} \right),
\widetilde{U}_{2}(\theta_{1})=
\left(\begin{array}{cc}
\cos(\theta_{1}) & \sin(\theta_{1}) \\
\sin(\theta_{1}) & -\cos(\theta_{1})  \\
\end{array} \right),
\widetilde{U}_{3}(\theta_{1})=
\left(\begin{array}{cc}
-1 & 0 \\
0 & -1  \\
\end{array} \right)
\end{displaymath}
and
\begin{displaymath}
\widetilde{U}_{1}(\theta_{2})=
\left(\begin{array}{cc}
\cos(\theta_{2}) & -\sin(\theta_{2}) \\
-\sin(\theta_{2} & -\cos(\theta_{2}) \\
\end{array} \right),
\widetilde{U}_{2}(\theta_{2})=
\left(\begin{array}{cc}
\cos(\theta_{2}) & -\sin(\theta_{2}) \\
-\sin(\theta_{2}) & -\cos(\theta_{2})  \\
\end{array} \right),
\widetilde{U}_{3}(\theta_{2})=
\left(\begin{array}{cc}
-1 & 0 \\
0 & -1  \\
\end{array} \right).
\end{displaymath}

Notice that
\begin{eqnarray}
U(\theta_{1})U(\theta_{2})\vert 00 \rangle = [\cos(\theta_{1})\cos(\theta_{2})-\sin(\theta_{1})\sin(\theta_{2})] \vert 00 \rangle + [\sin(\theta_{1})\cos(\theta_{2})-\sin{\theta_{2}}\cos{\theta_{1}}] \vert 11 \rangle, \nonumber \\
\end{eqnarray}
which is generally an entangled state. We will build states with the following notation
\begin{eqnarray}
\vert x; \theta \rangle &=& U(\theta)U(x)\vert 00 \rangle  \nonumber \\
&=&[\cos(\theta)\cos(x)-\sin(\theta)\sin(x)] \vert 00 \rangle + [\sin(\theta)\cos(x)-\sin(x)\cos(\theta)] \vert 11 \rangle \nonumber
\end{eqnarray}
where $\theta_{1}=\theta$ and $\theta_{2}=x$ and
\begin{eqnarray}
\vert y \rangle = U(y) \vert 00 \rangle = \cos(y) \vert 00 \rangle - \sin(y) \vert 11 \rangle. \nonumber
\end{eqnarray}

%%%%%%%%%%%%%%%%%%%%%%%%%%%%%%%%%%%%%%%%%%%%%%%%%%%%%%%%%%%%%%%%%%%%%%%%%%
%%%%%%%%%%%%%%%%%%%%%%%%%%%%%%%%%%%%%%%%%%%%%%%%%%%%%%%%%%%%%%%%%%%%%%%%%%

\end{document}